\DeclareMathOperator{\CH}{CH}
\providecommand{\pb}[1]{{\sc #1} problem}
 \newtheorem{observation}{Observation}
\newtheorem{open}{Open problem}
\newcommand{\rodrigo}[2][says]{** \textsc{Rodrigo #1:} \textcolor{red}{\textsl{#2}} **}
\newcommand{\mati}[2][says]{*** \textsc{matias #1:} \textcolor{blue}{\textsl{#2}} ***}
\definecolor{darkgreen}{rgb}{0,0.4,0}
\newcommand{\alex}[2][says]{*** \textsc{alex #1:} \textcolor{darkgreen}{\textsl{#2}} ***}
\let\doendproof\endproof
\renewcommand\endproof{~\hfill\qed\doendproof}
\newcommand{\ShoLong}[2]{#2} %New command to have short and long versions coexist. Change the "#1" to "#2" for a long version, and vice versa
\newcommand{\etalem}{}
\title{New results on stabbing segments with a polygon}
 \author{Jos\'e Miguel D\'iaz-B\'a\~nez\inst{1}\fnmsep\thanks{Partially supported by COFLA-Excellence Projects P09-TIC-4840 and P12-TIC-1362 (Junta de Andaluc\'{i}a).}\fnmsep$^\ddagger$
\and Matias Korman\inst{2,3}\fnmsep\thanks{With the support of the Secretary for Universities and Research of the Ministry of Economy and Knowledge of the Government of Catalonia and the European Union.}\fnmsep$^\ddagger$
\and Pablo P\'erez-Lantero\inst{4}\fnmsep\thanks{Partially supported by grant FONDECYT 11110069 and
 MEC MTM2009-08652.}
\and\\ Alexander Pilz\inst{5}\fnmsep\thanks{Recipient of a DOC-fellowship of the Austrian Academy of Sciences at the Institute for Software Technology, Graz University of Technology, Austria.}
\and Carlos Seara\inst{6}\fnmsep\thanks{Partially supported by the ESF EUROCORES programme EuroGIGA -ComPoSe IP04-MICINN Project EUI-EURC-2011-4306.}
\and Rodrigo I. Silveira\inst{7,6}\fnmsep\thanks{Funded by Portuguese funds through the  Center for Research and Development in Mathematics and Applications (CIDMA) and by the Portuguese Foundation for Science and Technology (FCT), within project PEst-OE/MAT/UI4106/2014, and by FCT grant SFRH/BPD/88455/2012.}\fnmsep$^\ddagger$
}
\institute{
% 1
Dept. Matem\'atica Aplicada II, Universidad de
Sevilla, Spain.% \texttt{dbanez@us.es}.
% 2
\and  National Institute of Informatics, Tokyo, Japan.
% 3
\and JST, ERATO, Kawarabayashi Large Graph Project.
% 4
\and Escuela de Ingenier\'ia Civil en Inform\'atica, Universidad de Valpara\'iso, Chile. %\texttt{pablo.perez@uv.cl}.
% 5
\and Institute for Software Technology, Graz University of Technology, Austria.
% 6 
\and Dept.\ Matem\`atica Aplicada II, Universitat Polit\`ecnica de Catalunya, Spain. 
% 7 
% 
\and Dept. de  Matem\'atica \& CIDMA, Universidade de Aveiro, Portugal. %\texttt{\{matias.korman | carlos.seara | rodrigo.silveira\}@upc.edu}.
}
\begin{document}

\maketitle

\begin{abstract}
We consider a natural variation of the concept of \emph{stabbing} a set of segments with a simple polygon:
a segment $s$ is stabbed by a simple polygon~$\mathcal{P}$ if at least 
one endpoint of $s$ is contained in~$\mathcal{P}$, and a segment set $S$ is stabbed by $\mathcal{P}$ if 
$\mathcal{P}$ stabs every element of $S$. 
Given a segment set $S$, we study the problem of finding a simple polygon~$\mathcal{P}$ stabbing $S$
in a way that some measure of $\mathcal{P}$ (such as area or perimeter) is optimized.
We show that if the elements of $S$ are pairwise disjoint, the problem can be solved in polynomial time. 
In particular, this solves an open problem posed by L\"offler and van Kreveld~[Algorithmica 56(2), 236--269 (2010)] about finding a maximum perimeter convex hull for a set of imprecise points modeled as line segments. 
Our algorithm can also be extended to work for a more general problem, in which instead of segments, the
set $S$ consists of a collection of point sets with pairwise disjoint convex hulls.
We also prove that for general segments our stabbing problem is NP-hard.
\end{abstract}

\section{Introduction}\label{intro}

Let $S$ be a set of $n$ straight line segments (\emph{segments} for
short) in the plane. The problem of stabbing $S$ with different
types of stabbers (in the computer science literature) or
transversals (in the mathematics literature) has been widely studied during the last two decades.

Rappaport~\cite{R} considered the case in which the stabber is a simple polygon.
Specifically, he studied the following problem: a simple polygon $\mathcal{P}$
is a \emph{polygon transversal} of $S$ if we have
$\mathcal{P}\cap s\not= \emptyset$ for all $s\in S$; that is, every
segment in $S$ has at least one point in $\mathcal{P}$. A simple polygon $\mathcal{P}$
is a \emph{minimum polygon transversal} of $S$ if $\mathcal{P}$ is a polygon
transversal of $S$ and all other transversal polygons have equal or larger perimeter. Rappaport observed that such a polygon always exists, is convex,
and may not be unique. He gave an $O(3^m n + n\log n)$ time algorithm for computing one, 
where $m$ is the number of different
segment directions. Several approximation algorithms are
known~\cite{DM,HR}, but determining if the general problem can be solved
in polynomial time is still an intriguing open problem.

Arkin {\etalem et al.}~\cite{ADKMPSY} considered a similar problem: $S$ is
\emph{stabbable} if there exists a convex polygon whose boundary
$\mathcal{C}$ intersects every segment in $S$; the closed convex
chain $\mathcal{C}$ is then called a (convex) \emph{transversal}
or \emph{stabber} of $S$. Note that in this variant there is not always a solution.
Arkin {\etalem et al.}~\cite{ADKMPSY} proved that deciding
whether $S$ is stabbable is NP-hard.
%
%\mati{to do: check this claim}

In this paper we also consider the problem of stabbing the set $S$
by a simple polygon, but with a different criterion that is
between the two criteria above.
More concretely, we use the following definition:

\begin{definition}
A segment $s\in S$ is stabbed by a simple polygon $\mathcal{P}$ if \emph{at least} one of
the two endpoints of $s$ is contained in $\mathcal{P}$.
The set $S$ is stabbed by $\mathcal{P}$ if every segment of $S$ is stabbed by
$\mathcal{P}$.
\end{definition}

With this definition we study the {\sc Stabbing Polygon Problem (SPP)}, defined as finding a simple polygon $\mathcal{P}$ that stabs a given set $S$ of segments and optimizes some objective function. The main focus of this paper is the case in which we want to minimize the perimeter of the stabber (denoted by {\sc MinPerSPP}). Naturally, one could also study the maximization variants of the problem, or even the case in which we measure the area instead. However, with the current formulation these problems are trivial, since there exists stabbers of arbitrarily large area/perimeter (or arbitrarily small area, realized, e.g., by a simple polygon resulting from ``thickening'' a plane tree spanned by segment endpoints by an arbitrarily small amount).
Instead, we follow the formulation of L\"{o}ffler and van Kreveld~\cite{kl-alchip-08,LK} and formulate the {\sc MinAreaSPP}, {\sc MaxPerSPP} and {\sc MaxAreaSPP} as follows: given a set $S$ of segments, select one endpoint of each segment such that the convex hull of the selected endpoints has minimum area, maximum perimeter, or maximum area, respectively.
It is straightforward to verify that any polygon obtained in this way is a stabber.
Also, the optimal solution of the {\sc MinPerSPP} can be obtained with this approach (as the convex hull of any stabber is also a stabber with at most the same perimeter; see also \cite[Lemma~1]{R}).

Note that the four variants that we consider are discrete (that is, we do not consider the interior of the segments).
One could alter the definition of the {\sc SPP} by saying that the input is a collection of \emph{pairs of points} instead of segments. However, as we will show later, the segments play an important role in establishing the difficulty of the problem, hence we keep the segment formulation. 

Although the differences between the {\sc MinPerSPP} and the problems studied by Rappaport~\cite{R} and Arkin {\etalem et al.}~\cite{ADKMPSY} may look small, we observe that the problems are substantially different. The difference with the problem studied by Rappaport~\cite{R} is that $\mathcal{P}$ can be a stabber and have both endpoints of a segment of $s\in S$ outside $\mathcal{P}$ (provided that the interior of $s$ is stabbed by $\mathcal{P}$), whereas we force one of the endpoints to be in $\mathcal{P}$. One of the common properties of both problems is that the optimal solution is a convex polygon and that it always exists (the convex hull of $S$ is always a stabbing polygon according to both definitions). On the other hand, a solution of an instance of the {\sc MinPerSPP}  may fully contain a segment of $S$. This is not allowed in the stabber definition used by Arkin {\etalem et al.}~\cite{ADKMPSY}. Thus, we can say that our problem is between the two mentioned ones.

%\vspace{0.2cm}

\subsection{Related work}
Prior to the paper by Rappaport~\cite{R}, Meijer and Rappaport~\cite{MR} solved the problem of computing a minimum perimeter polygon transversal for a set of $n$ parallel segments in optimal $\Theta(n\log n)$ time.
%Javad et al.~\cite{mohades} described an $O(n^2 \log n)$ time algorithm for disjoint segments, which, however, turned out to be incorrect.\footnote{Personal communication with A. Javad.}
Mukhopadhyay {\etalem et al.}~\cite{MKGB} considered the related problem of computing a convex polygon transversal of minimum \emph{area} for vertical segments, giving an algorithm that runs in $O(n \log n)$ time.
Prior to the work of Arkin {\etalem et al.}~\cite{ADKMPSY} on convex transversal,
Goodrich and Snoeyink~\cite{GS} gave an $O(n\log n)$ time algorithm
that decides whether a convex transversal exists when the segments are parallel.

Pairs of points are also the input of the problems studied by Arkin {\etalem et al.}~\cite{adhkm12}, who studied the 1-center and 2-center problems in that context.
In the former problem, the goal is to find a disk of smallest radius containing at least one point from each pair. 
The latter one aims at finding two disks of smallest size such that each pair has one point in each disk. Arkin {\etalem et al.}~\cite{adhkm12} presented algorithms for these problems that run in $O(n^2 \text{polylog } n)$ and $O(n^3 \log^2 n)$ time, respectively.

Several similar problems have been considered in the context of \emph{data imprecision} by L\"{o}ffler and van Kreveld~\cite{LK,kl-alchip-08}. The input in their problems is a set of imprecise points, where each point is specified by a region in which the point may lie. The output is the location of each point within the specified region so that the area or perimeter of the convex hull is maximized/minimized. Among other cases, they consider the case in which each imprecise region is a segment~\cite{LK}. 
First, they show that for the maximization of perimeter and area, one can restrict the search to the endpoints of the regions, thus implying that their problems are equivalent to the {\sc MaxPerSPP} and the {\sc MaxAreaSPP}, respectively. 
Then, they give several polynomial time algorithms for particular cases of the input (e.g., parallel segments). They also show that the {\sc MaxAreaSPP} is NP-complete, and that {\sc MaxPerSPP} is NP-hard. In a companion paper~\cite{kl-alchip-08}, they provide a linear-time approximation scheme for the {\sc MaxAreaSPP}.

%, we cite those where regions are segments \rodrigo{Do we cite them all? smallest/largest, area/perimeter: this gives 4 different problems. But here we only talk about one of them (largest, area)?}\mati{If I understood correctly, in [14] they are considering Rappaport's definition of stabber with the 4 objective functions. They give several polynomial time algorithms, but for particular cases of the input (parallel segments, etc). In the general case, they show NP-complete for MaxArea, and NP-hard for MaxPerim.}. For maximum-area convex hulls, the problem can be solved in $O(n^3)$ time if the segments are parallel, or when they are pairwise disjoint with endpoints  in convex position. The problem is NP-hard for general segments, but admits a linear-time approximation scheme~\cite{LK,kl-alchip-08}.

%The minimum-perimeter and minimum-area convex hull problems for
%parallel segments coincide with the problems studied by Meijer and
%Rappaport~\cite{MR} and Mukhopadhyay {\etalem et al.}~\cite{MKGB},
%respectively. Notice also that the setting we consider is in
%fact a constrained version of the problems studied by L\"{o}ffler
%and van Kreveld~\cite{LK}, in which each imprecise point is
%specified by a pair of points. 

%In a more general setting,  
Daescu {\etalem et al}.~\cite{daescu2010} studied the complexity of the problem of,
given a $k$-colored point set, finding a convex polygon of minimum
perimeter containing at least one point from each color.
Note that the {\sc MinPerSPP} is the special case
in which $2n$ points are colored with $n$ colors, and each color is used twice.
They proved that their problem is NP-hard if $k$ is part of the input of the
problem, and presented a $\sqrt{2}$-approximation algorithm for the {\sc MinPerSPP} that runs in $O(n^2)$ time.

Parallel to our research, the model we consider was studied in a more general setting by Consuegra and Narasimhan~\cite{cn-ap-13} and Consuegra {\etalem et al.}~\cite{cnrt-ap-12}. They define a class of geometric problems called \emph{avatar problems}. In these problems one has a collection of objects, each of which has $k$ 
copies (or avatars). The objective is to find some structure that uses at least one copy of each object. 
Our problem fits into their model as a particular case in which $k=2$, each avatar is a point, and the structure to look for is a minimum/maximum perimeter/area convex hull of the selected points.
%Two results from~\cite{cn-ap-13} are relevant in our context.
Consuegra {\etalem et al.}~\cite{cnrt-ap-12} gave a dynamic programming algorithm that can be used to solve the {\sc MinPerSPP} in polynomial time for parallel segments.
In the companion paper~\cite{cn-ap-13}  they present a polynomial-time approximation scheme that can be applied to both the {\sc MinPerSPP} and the {\sc MinAreaSPP}. %, yielding an $O(n^8)$-time approximation algorithm.
%The latter algorithm can also be applied to minimize the area of the convex hull.

\subsection{Our results}
We show in Section~\ref{secdisjoint} that if $S$ is a set of pairwise disjoint segments,
the four variants of the {\sc SPP} can be solved in polynomial time.
%We then show how the algorithm can be adapted to solve the following maximization problem: 
%Select exactly one point on each segment in $S$ such that the perimeter (or area) of
%the convex hull of the selected points is maximized.
In particular, this method can be used to solve the open problem posed by L\"offer and van Kreveld~\cite{LK} (since their maximum area transversal problem is equivalent to our {\sc MaxAreaSPP}). In Section~\ref{sec:islands} we extend our algorithm for disjoint segments to \emph{islands of points}: $S$ is a collection of point sets with pairwise-disjoint convex hulls, proving that this problem can also be solved in polynomial time.
Finally, we show that the minimization variants of the problem for the case of general segments (that is, when crossings are allowed) is NP-hard in Section~\ref{sec:hardness}. This result complements with the NP-hardness for the maximization variants of L\"offer and van Kreveld~\cite{LK}.
%
%In Section~\ref{sec:hardness} we show that for general segments
%the \pb{MPSPP} is NP-hard.
% We also improve the result of
% Daescu {\etalem et al.}~\cite{daescu2010} showing that their problem
% is NP-hard even for fixed values of $k$ (and with simpler arguments).
We complement the NP-hardness result by showing that the four variants of the {\sc SPP} are Fixed Parameter Tractable (FPT) in the number of segments that cross other segments. A summary of the results obtained for line segments can be seen in Table~\ref{tab_res} (note that, for conciseness, our results for islands of points are not included in the table).

\begin{table}[ht] 
\centering  
\begin{tabular}{|c |c |c |} 
\hline
 &   Minimization & Maximization  \\ \hline 
 & NP-hard (Th.~\ref{thm_np_hard}) & NP-hard~\cite{LK} \\
Perimeter  & PTAS~\cite{cn-ap-13}& \\
& Polynomial for non-crossing (Th.~\ref{teo:main}) & Polynomial for non-crossing (Th.~\ref{thm:maximize})\\
& FPT (Obs.~\ref{obs_fpt}) & FPT (Obs.~\ref{obs_fpt})\\ \hline
 & NP-complete (Th.~\ref{thm_np_hard}) & NP-complete~\cite{ich-waac-11,LK}\\
Area & PTAS~\cite{cn-ap-13} & PTAS~\cite{kl-alchip-08} \\
& Polynomial for non-crossing (Th.~\ref{teo:main}) & Polynomial for non-crossing (Th.~\ref{thm:maximize})\\
& FPT (Obs.~\ref{obs_fpt}) & FPT (Obs.~\ref{obs_fpt})\\ \hline

\end{tabular} 
\vspace{3mm}
\caption{Summary of known and new results for the four variants of the {\sc Stabbing Polygon Problem (SPP)}, for a set of line segments.} % title of Table 
\label{tab_res} % is used to refer this table in the text 
\end{table} 

%throughout the paper 
Note that optimization of the perimeter requires comparing sums of radicals (specifically, the sum of Euclidean distances).
It is not known whether this problem is in NP~\cite{bloemer}, and therefore the NP-hardness result does not imply NP-completeness for the minimization version of the problem (the same fact was also observed in~\cite{LK}). For the same reason, we assume the real RAM as the underlying computational model in our algorithms.

% Due to lack of space, several proofs have been deferred to the full version~\cite{fullversion}.

\section{Solving the {\sc SPP} for pairwise disjoint segments}\label{secdisjoint}

In this section we show that if the segments in $S$ are pairwise disjoint, then the {\sc SPP} can be solved in polynomial time. For ease of exposition, we present the algorithm for the {\sc MinPerSPP}. Observe throughout the description that the approach naturally extends to the {\sc MinAreaSPP} as well.
In Section~\ref{sec:maxarea} we explain the modifications needed for the maximization variants of the problem.

Given any two points $p$ and $q$ in the plane, let $pq$ denote the segment
joining $p$ and $q$. For any 
simple polygon~$\cal P$ let $\partial \cal P$ denote the boundary of $\cal P$.
Consider the set $B$ of all possible bitangents of $S$, i.e., $B$ is the set
of all segments not contained in $S$ spanned by two endpoints of
segments in $S$. Note that the elements of $B$ might cross each
other and might also cross the segments in $S$.
A polygon $C^*$ with minimum perimeter that contains at least one
endpoint of every segment of $S$ is spanned by endpoints of
segments in $S$, and its edges are elements of $B$.

Arkin {\etalem et al.}~\cite{ADKMPSY} describe a dynamic programming approach to decide whether a set of pairwise disjoint segments admits a convex transversal (the vertices of the transversing polygon are restricted to a given set of candidate points).
They use constant-size polygonal chains that separate subproblems and are not crossed by segments; therefore the subproblems are independent.
We adapt their approach to produce an algorithm for the {\sc MinPerSPP}.
While in their problem setting the boundary of the
polygon has to intersect all segments, the {\sc SPP} requires at least
one endpoint of each segment to be contained in the polygon.
The key difference (apart from the fact that no candidate points are needed) is that in our problem the segments actually \emph{can} cross the polygonal chains that separate subproblems.
However, we show below that such segments can be handled in a way that leads to polynomial running time.
% We apply dynamic programming to find $C^*$, extending an approach by Arkin {\etalem et al.}~\cite{ADKMPSY}.
% They use constant-size polygonal chains that separate subproblems and are not crossed by segments.
% The latter property is in contrast to our approach where segments actually cross this separating chain but can be handled in a way that allows overall polynomial running time.
Afterwards, we discuss how to adapt this approach for the maximization variant.

\subsection{Triangulating a combination of segments and a polygon}
The following way of triangulating a combination of segments and a
polygon is crucial for the algorithm, and motivates the structure
of the subproblems used in our dynamic programming algorithm.

Let $\mathcal{Q}$ be a simple polygon and let $S_c$ be a set of pairwise disjoint segments each of which crosses $\partial \mathcal{Q}$ exactly once.
Throughout this section we distinguish between a segment \emph{intersecting} (having a point in common) and \emph{crossing} (having an \emph{interior} point in common with) another segment or set.
Let $X$ be the interior of $\mathcal{Q}$ and let $X'$ denote the set we get
after removing the $1$-dimensional regions of $S_c$ from $X$, i.e.,
$X'=X\setminus \bigcup_{s\in S_c}s$ where each segment~$s$ is considered to be an infinite set of points.
Then $X'$ is an open region whose closure is
$\mathcal{Q}$.
Note that the vertices of $X'$ are the union of:
(i)~the vertices of $\mathcal{Q}$, (ii)~the endpoints of edges in $S_c$ that are in the interior of $\mathcal{Q}$, and
(iii)~the points where elements of~$S_c$ cross $\partial \mathcal{Q}$.
Further, note that $X'$ might not be connected if there is a segment
of $S_c$ that has one endpoint on $\partial \mathcal{Q}$ and the other one outside  $\mathcal{Q}$ (e.g., the longest segment in \figurename~\ref{fig_nc_triangulation}, left).

We now triangulate $X'$ (i.e., partition it into triangles that are spanned only by vertices of $X'$, see \figurename~\ref{fig_nc_triangulation}).
The triangulation $T$ of $X'$
behaves like the triangulation of a collection of simple polygons (imagine the $1$-dimensional
parts not in $X'$ where the segments of $S_c$ enter $\mathcal{Q}$, i.e., $X\setminus X'$, to be slightly ``split'', as in \figurename~\ref{fig_nc_triangulation}, center).
Note that the vertices of $T$ are exactly the vertices of~$X'$.
Each edge in~$T$ that is not part of $\partial \mathcal{Q}$ or part of a segment in $S_c$ partitions $X'$ into two sets (note that each set need not be connected).
We call such edges \emph{chords} (gray edges in \figurename~\ref{fig_nc_triangulation}, right).
Hence, a chord is an edge where each endpoint is either an endpoint of a segment of~$S$ or a crossing between a segment of~$S$ and a bitangent of~$B$.
Chords are the equivalent of \emph{diagonals} of simple polygons (interior edges that subdivide the polygon into two smaller polygons).
Further, $X'$ might also be separated by an edge that is part of a segment in $S_c$ (like the longest edge in \figurename~\ref{fig_nc_triangulation}).
We call such a segment a \emph{separating segment}.
Keep in mind that there are chords that have one or both of their endpoints not on the endpoint of a segment or at a vertex of $\mathcal{Q}$, but at the crossing of a segment with $\partial \mathcal{Q}$.
In any case, a chord or a separating segment uniquely defines a polygonal path from one point on an edge of $\mathcal{Q}$ to another point on an edge of~$\mathcal{Q}$. Following~\cite{ADKMPSY}, we will use these polygonal paths of at most three edges, called ``bridges'' (whose formal description will be given later), to define our subproblems to obtain a solution when taking $C^*$ as $\mathcal{Q}$. Further note that at most two of the edges are a portion of a segment of $S$. One may think of our approach as being similar to the classic dynamic programming algorithm for minimum weight triangulations of simple polygons~(see, e.g.,~\cite{klincsek}), but with a major difference:
we do not know the boundary of the triangulated region beforehand.

\begin{figure}[bt]
\centering
\includegraphics{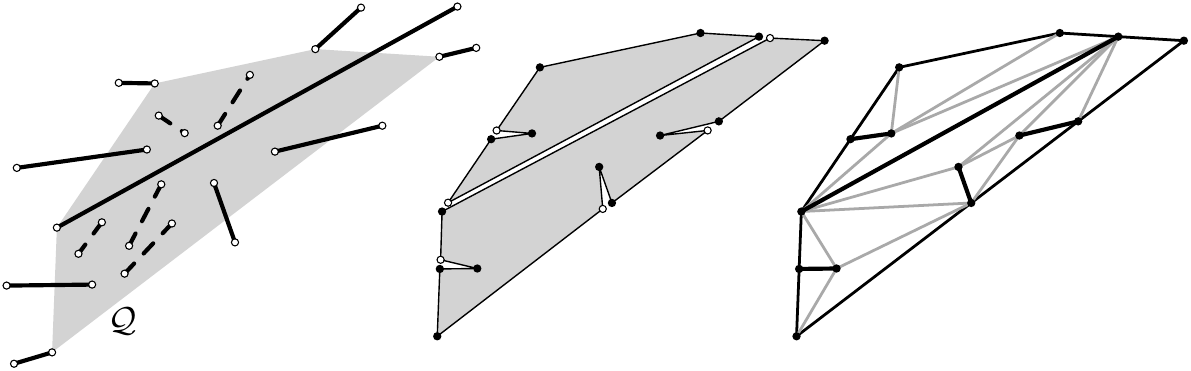}
\caption{Left: an optimal polygon $\mathcal{Q}$, only the solid edges are in $S_c$.
Center: schematic view of $X'$ as a collection of simple polygons.
Right: a triangulation of $X'$, gray edges are chords. 
The segments fully contained in the polygon (shown dashed) are ignored by the triangulation.}
\label{fig_nc_triangulation}
\end{figure}

\subsection{Subproblems}
Every subproblem is defined by an ordered pair $(a,b)$ of directed bitangents
of $B$ and a \emph{bridge} $\beta$, a polygonal chain of at most three edges that connects $a$ and $b$. 
When evaluating a subproblem $(a,b,\beta)$, we assume that
$a$ and~$b$ are edges of $C^*$ (with $a$ being directed counterclockwise and $b$ being directed clockwise around $\partial C^*$) and that $C^*$ equals $\mathcal{Q}$ in the discussion above (for some choice of~$S_c$ to be defined later).
Therefore the bridge $\beta$ is part of a triangulation of~$X'$ and separates $X'$;
$\beta$ is either a part of a separating segment or consists of a chord (called the \emph{chord of $\beta$})
and at most two parts of segments of $S_\mathrm{c}$.
See \figurename~\ref{fig_bridge_types} for examples of bridges.

Let us recap the possible structures of bridges.
Traversing a bridge~$\beta$ from $a$ to $b$, $\beta$ starts from either (i) an endpoint of $a$,
% (that therefore is an endpoint of a segment in~$S$) 
or (ii) the intersection point of some segment $s \in S$ and bitangent~$a$.

In the first case, when $\beta$ starts from an endpoint of $a$, $\beta$ consists of a separating segment ending at its intersection point with bitangent~$b$, or~$\beta$ contains a chord that connects to an endpoint of $b$ or to a piece of a segment that crosses~$b$.

In the second case, when $\beta$ starts from intersection point $s \cap a$, the bridge either continues with a chord, which starts at $s \cap a$, or it continues along $s$.
In the latter case, $\beta$ continues along $s$ towards $b$ until reaching its endpoint.
The bridge can end there, if that endpoint is also an endpoint of bitangent~$b$ (in which case $s$ is a separating segment) or it continues through a chord that connects to $b$ or to a piece of a segment that crosses~$b$.

The analogous structure occurs when traversing~$\beta$ from $b$ to $a$.
Keep in mind that a bridge might have a chord that is not a bitangent of $B$ (like the second from the left in  \figurename~\ref{fig_bridge_types}).
Further, note that a bridge can only be crossed by a segment through
the chord, since the segments are pairwise disjoint by definition.

\begin{figure}[bt]
\centering
\includegraphics{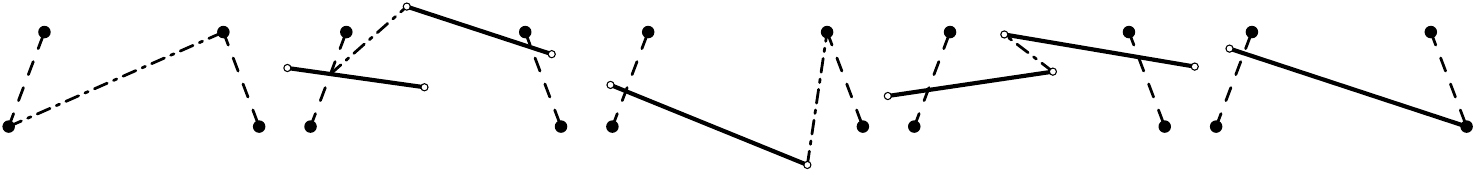}
\caption{Examples of bridges. The two bitangents defining the subproblem are shown dashed, chords are dash-dotted, and segments from $S_c$ are shown solid.}
\label{fig_bridge_types}
\end{figure}

Let the two directed bitangents of a subproblem be $a=a_1 a_2$ and $b=b_1 b_2$.
Given a directed bitangent $a=a_1 a_2$ we write $\overline{a}$ for
the directed bitangent $a_2 a_1$. Without loss of generality, let $a_1$ and $b_1$ be
on the $x$-axis and $a_2$ and $b_2$ be above it.
Also, let $b$ be to the left of the directed line through $a_1$ and $a_2$.
See \figurename~\ref{fig_subproblem_examples} for an illustration.

%We define the outcome of a solved subproblem as follows.

% Let $C^*_{a,b}$ be a polygon of minimum perimeter that contains the bitangents $a$ and $b$ as two of its edges and contains at least one endpoint of each segment in $S$.
% Note that $C^*_{a,b} = C^*_{\overline{b},\overline{a}}$.
% Let $C_{a,b}$ be the polygonal chain on $\partial C^*_{a,b}$ starting at $a_1$, counterclockwise traversing $\partial C^*_{a,b}$ and ending at~$b_1$.
% Note that $C_{a,b} \neq C_{\overline{b},\overline{a}}$.

\subsection{Solution of a subproblem}
We define the solution of a subproblem as follows.
Let $C^*_{a,b,\beta}$ be a polygon of minimum perimeter that: 
(i) contains $a$ and $b$
as two of its boundary edges, 
(ii) contains at least one endpoint of each segment in $S$, and
(iii) contains both endpoints of every segment of $S$ that crosses the chord of~$\beta$.
\ShoLong{%
The importance of the third condition will become clear later.
}
{%
The third condition is particularly important, as will become clear later.
}

Let $C_{a,b,\beta}$ be the polygonal chain on $\partial C^*_{a,b,\beta}$ starting at
$a_1$, counterclockwise traversing $\partial C^*_{a,b,\beta}$ and ending at $b_1$.
Note that $C_{a,b,\beta}$ is an open polygonal chain, as opposed to $C^*_{a,b,\beta}$, which is a simple polygon.

The solution of a subproblem $(a,b,\beta)$ is $C_{a,b,\beta}$, and its cost is the
length of that chain. 
The base case occurs when $a_2=b_2$, and has cost equal to the sum of the lengths of $a$ and~$b$.
% \mati{Alex said that  a1=b1 or a1=b2 cannot happen. Can you briefly justify this? It is basically that the associated solution has cost infinity because it does not form a correct quadrilateral?}
% \alex{yes}
Note throughout the construction that this is the only way $a$ and $b$ can intersect.
In general, $a$ and $b$ form a quadrilateral $ a_2 a_1 b_1 b_2$.
If the quadrilateral is not convex, we discard the subproblem (i.e., we assign it a cost of $+\infty$).
Therefore, from now on we discuss the more interesting case in which the quadrilateral is convex.

\begin{figure}[bt]
\centering
\includegraphics[width=\textwidth]{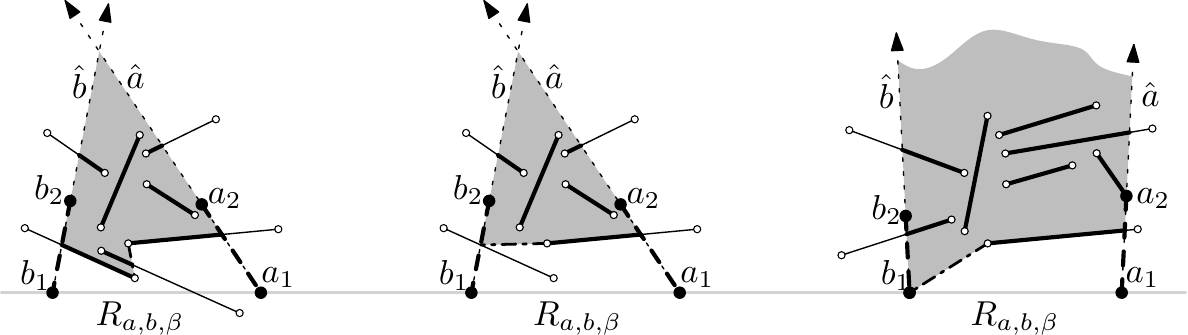}
\caption{Examples of subproblems.}
\label{fig_subproblem_examples}
\end{figure}

\subsection{Getting the overall solution}
%From now on we assume that  $a$ and $b$ define a convex quadrilateral.
%The outline of the algorithm is as follows.
In order to find a solution of the initial problem we need to find $a,b, \beta$ so that the solution to the 
subproblem $(a,b,\beta)$ gives a solution of the given instance of the {\sc MinPerSPP}.
We do that by guessing 
a pair of bitangents $x,y \in B$, with $x=x_1x_2, y=y_1y_2$, such that
$y_2y_1x_1x_2$ are assumed to be four consecutive vertices of $C^*$. Hence, after $O(|S|^4)$ guesses
we have found $x$ and $y$ such that $\partial C^* =C_{x,y,\beta_0}\cup y_1 x_1$ with
$\beta_0 = x_1 y_1$.
%
%\mati{rather than an outline this is an initialization. Should we write so? And start the outline section immediately afterwards?}
%\alex{Well, an outline will also contain the initialization. If I remember correctly, I was even asked to move this part here.}
%
%
%The subproblem $(x,y,\beta)$ will be solved in one of several different ways.
%In general, solving them will involve finding a bitangent $c$, generating two new bridges $\beta_1$ and $\beta_2$ and combining the solutions for the subproblems $(x,\overline{c},\beta_1)$ and $(c,y,\beta_2)$.
%
%
Suppose we are given the solution $\mathcal{Q}=C^*$.
Let $X'$ be defined as above, and let $S_c$ be the set of segments in $S$ that cross $C_{x,y,\beta_0}$ (which does not include the ones that cross $\beta_0$). 
Let $\Delta_0$ be the triangle of a triangulation $T$ of $X'$
that has $\beta_0 = y_1 x_1$ as one side. The subproblem $(x,y,\beta_0)$ will be solved
by guessing the third endpoint of $\Delta_0$ and the edge $c$ of $C_{x,y,\beta_0}$ that is incident to $\Delta_0$ or that is
crossed by a segment whose endpoint is incident to~$\Delta_0$. In the most general case,
this will result in two new subproblems $(x,\overline{c},\beta_1)$ and $(c,y,\beta_2)$, where each of
$\beta_1$ and $\beta_2$ contains one side of $\Delta_0$ that is not part of $\beta_0$
(we will consider the other cases in detail below, as well as the exact rules for guessing the third endpoint). See \figurename~\ref{fig_nc_start}.

% Note that even though $c$ is part of the $C_{x,y,\beta_0}$, the guess for $\Delta_0$ shown in the figure is still wrong.
% However, a different guess results in the optimal solution.

\begin{figure}[ht]
\centering
\includegraphics{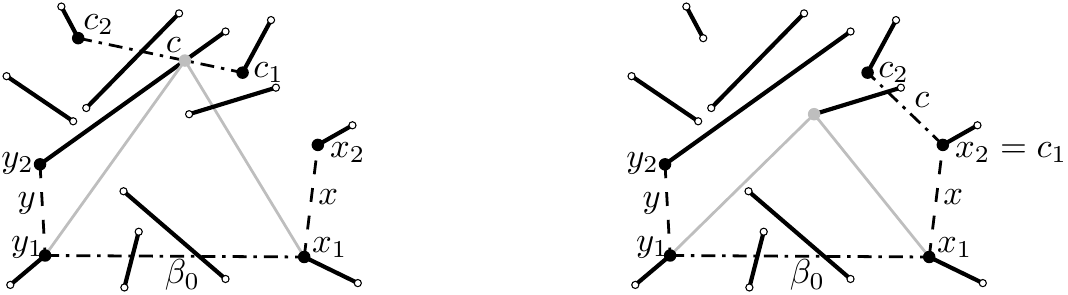}
\caption{An example for choosing the initial pair defining a subproblem and two choices for~$\Delta_0$ (gray). We will observe that the choice of $\Delta_0$ in the example to the left does not result in the optimal solution, even though $c$ is an edge of the optimal solution.}
\label{fig_nc_start}
\end{figure}
% \mati{I find the term  "initial pair" a bit misleading (I thought it referred to the first query we ask to the DP problem). How about "Situation for a typical cabbbeta subproblem"? or something similar?. }

\subsection{Subproblem regions}
Let $\hat{a}$ be the ray through $a_2$ starting at $a_1$.
Let $\hat b$ be defined analogously.
For every subproblem $(a,b,\beta)$, only a part of the elements of $S$ is relevant.
Consider the (possibly unbounded) maximal region to the left of the supporting line of $a$ and to the right of the supporting line of $b$ (recall that $a$ and $b$ are directed).
The bridge $\beta$ disconnects that region into two parts.
The \emph{subproblem region} $R_{a,b,\beta}$ is the part ``above'' $\beta$
(i.e., the part adjacent to $\hat{a} \setminus a$ and $\hat{b} \setminus b$;
the bridge might not be $x$-monotone).

The subproblem region is marked gray in 
\figurename~\ref{fig_subproblem_examples}.
Only the segments that have at least one endpoint in $R_{a,b,\beta}$ are
relevant for finding $C_{a,b,\beta}$.
We distinguish between three different types of such segments:
(1) Segments that are entirely inside $R_{a,b,\beta}$ are \emph{complete}.
(2) Segments that share more than one point with $R_{a,b,\beta}$ but are not complete are \emph{cut}.
(3) A segment with infinitely many points on the bridge is neither cut nor complete.
We say that a point is \emph{inside} $C_{a,b,\beta}$ when it is contained in the closure of
the region bounded by $C_{a,b,\beta}$ and $\beta$.

If there is a segment that is entirely to the right of
$a$ or to the left of $b$, then the choice of $a$ and~$b$ cannot give a solution and such a subproblem is assigned~$+\infty$ as cost.
We also do this if a segment intersected by $\hat{a}$ or $\hat{b}$ does not have an endpoint inside the
subproblem region.
% \alex{I slightly changed this last sentence, check please.}

Note that if a segment in a valid subproblem intersects $\hat a$ or $\hat b$, then we know which of its
endpoints must be inside $C_{a,b,\beta}$, while we do not know that for the cut segments that intersect the chord of the bridge. However, we will choose our subproblems in a way such that all endpoints of cut
segments in the subproblem region will be inside $C_{a,b,\beta}$; the reason
for that will become clear in the proof of Lemma~\ref{lem_case_4}, but the reader should keep this in mind as an
essential part of the method. For complete segments, we need to decide which
endpoint to select.

\begin{lemma}\label{lem_c_triang}
Given a subproblem instance $(a,b,\beta)$, let $t$ be the chord of $\beta$, or its only edge if $\beta$ is a single edge (which may be a chord itself, or part of a separating segment).
Let $X$ be the region bounded by $C_{a,b,\beta} \cup \beta$, and let $X' = X \setminus \bigcup_{s\in S_\mathrm{c}}s$, for $S_c$ the set of segments of $S$ that are crossed by the chain $C_{a,b,\beta}$.
Then either $t$ is an edge of $C_{a,b,\beta}$, or there exists a triangle $\Delta$ such that:
\begin{enumerate}
%\begin{itemize}
 \item The interior of $\Delta$ is completely contained in $X'$.
 \item The edge $t$ is an edge of $\Delta$.
 \item The apex of $\Delta$ (i.e., the vertex not on $t$) is either (i) an endpoint of a segment in $S_\mathrm{c}$ inside $X$, (ii) an endpoint of a segment in $S$ that is a vertex of $C_{a,b,\beta}$, or (iii) an intersection point between a segment in $S_c$ and $C_{a,b,\beta}$.
%\end{itemize}
\end{enumerate}
\end{lemma}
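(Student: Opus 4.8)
The plan is to prove this lemma by analyzing the triangulation structure of $X'$ near the edge $t$, closely following the idea that $t$ belongs to some triangle of a triangulation of $X'$, and showing that such a triangle satisfies the three claimed properties. First I would recall that, by the preceding discussion, $X'$ is an open region obtained from $X$ by removing the one-dimensional segments of $S_c$, and that $X'$ admits a triangulation $T$ whose vertices are exactly the vertices of $X'$: namely the vertices of the boundary $C_{a,b,\beta}\cup\beta$, the endpoints of segments of $S_c$ lying inside $X$, and the points where segments of $S_c$ cross the boundary. The edge $t$ (the chord of $\beta$, or the single edge of $\beta$) is by construction an edge of $X'$, since its endpoints are each either a segment endpoint or a crossing of a segment with the chain. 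So $t$ is eligible to appear as an edge of some triangle in $T$.

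The main argument is a dichotomy. Either $t$ already coincides with an edge of the boundary chain $C_{a,b,\beta}$ — in which case we are in the first case of the statement and there is nothing further to prove — or $t$ lies strictly in the interior of $X'$ (except possibly for its endpoints on the boundary), and then I would argue that $t$ must be an edge of at least one triangle $\Delta$ of any triangulation $T$ of $X'$ whose interior lies on the side of $t$ facing the chain. Here I would invoke the fact that a triangulation of $X'$ exists and that every interior edge of a triangulated planar region bounds triangles on both its sides; choosing the triangle on the appropriate side gives a $\Delta$ with $t$ as an edge and $\mathrm{int}(\Delta)\subseteq X'$, establishing properties (1) and (2).

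For property (3) I would identify the apex, i.e.\ the third vertex of $\Delta$, by appealing to the characterization of the vertices of $X'$ listed above. Since every vertex of $T$ is a vertex of $X'$, the apex must be one of exactly three types: an endpoint of a segment of $S_c$ interior to $X$ (case (i)), an endpoint of a segment of $S$ that happens to be a vertex of the chain $C_{a,b,\beta}$ (case (ii)), or a crossing point between a segment of $S_c$ and the chain (case (iii)). I would check that no other possibility arises: the apex cannot be a vertex of $\beta$ distinct from the endpoints of $t$, because $t$ already accounts for the $\beta$-side of $\Delta$ and the segments are pairwise disjoint, so no further segment of $S_c$ meets $\beta$ except through the chord itself; and the apex cannot lie in the relative interior of a boundary edge that is neither a segment endpoint nor a crossing, since triangulation vertices are confined to the vertex set of $X'$.

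The main obstacle I anticipate is the case analysis forced by the disconnectedness of $X'$: because a segment of $S_c$ may enter $X$ and split it, the region $X'$ need not be a single simple polygon, and $t$ may separate $X'$ into parts that are themselves disconnected. I would therefore have to argue carefully that a triangulation of this ``collection of simple polygons'' still exists and still has the property that the interior edge $t$ is incident to a valid triangle on the chain-facing side; the schematic ``splitting'' of $X$ along the removed segments (as in the earlier figure) is the device that reduces this to the ordinary polygon-triangulation setting. The secondary subtlety is ensuring the apex classification is exhaustive precisely when $\Delta$ abuts a separating segment, where the third vertex could be a crossing of type (iii) rather than a genuine segment endpoint; handling these degenerate incidences is where the bulk of the verification lies.
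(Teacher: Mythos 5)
Your proposal is correct and follows essentially the same route as the paper: triangulate $X'$ (using the ``splitting'' device to treat it as a collection of simple polygons), take the triangle incident to $t$ on the chain side when $t$ is not a boundary edge, and classify its apex via the characterization of the vertices of $X'$, which yields exactly cases (i)--(iii). The paper states this in one line, relying on the preceding subsection for the triangulation structure; your only inessential deviation is the side claim that the apex cannot be a vertex of $\beta$ other than an endpoint of $t$ --- in fact it can be (e.g.\ the point $a_\beta$ or a segment endpoint on the bridge), but such points are still of type (i) or (iii), so the classification remains exhaustive.
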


%\ShoLong{}
{\begin{proof}
Arbitrarily triangulate $X'$.
If $t$ is not on the boundary, then the triangle~$\Delta$ incident to $t$ inside the subproblem region fulfills the properties.
See \figurename~\ref{fig_nc_lem_c_triang}.
\end{proof}
}

\begin{figure}[ht]
\centering
\includegraphics{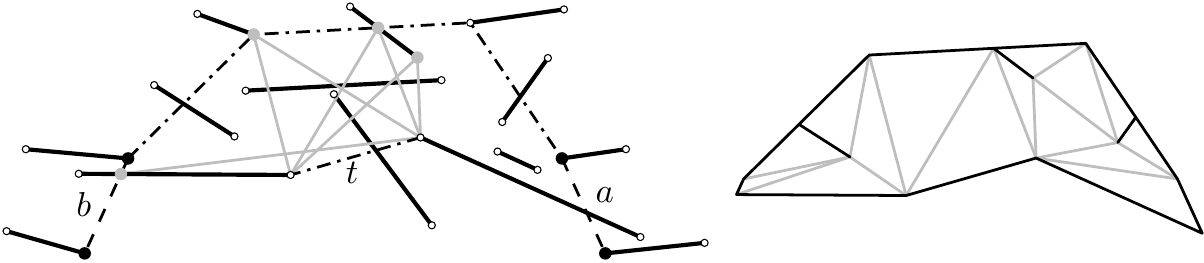}
\caption{Illustration of Lemma~\ref{lem_c_triang}.
Left: four possibilities for $\Delta$ shown in gray.
$C_{a,b,\beta}$ is dash-dotted, with the defining bitangents dashed.
Right: a triangulation of $X'$.}
\label{fig_nc_lem_c_triang}
\end{figure}

\begin{lemma}\label{lem_both_inside}
Let $\Delta$ be the triangle of Lemma~\ref{lem_c_triang}.
Any segment of $S$ that has a non-empty intersection with the interior of
$\Delta$ either has both its endpoints inside $C_{a,b,\beta}$ or crosses $t$;
in the latter case the endpoint that is inside $R_{a,b,\beta}$ is also
inside $C_{a,b,\beta}$.
\end{lemma}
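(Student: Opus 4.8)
The plan is to argue by contradiction, exploiting the defining property of the triangle $\Delta$ from Lemma~\ref{lem_c_triang}, namely that the interior of $\Delta$ is contained in $X' = X \setminus \bigcup_{s\in S_\mathrm{c}} s$. The key observation is that $X'$ is obtained from the interior $X$ of the region bounded by $C_{a,b,\beta}\cup\beta$ by removing precisely the segments in $S_c$, i.e.\ those segments of $S$ that actually cross the chain $C_{a,b,\beta}$. So if a segment $s\in S$ meets the interior of $\Delta$, then $s$ cannot be one of the segments whose removal carved out $\Delta$'s interior; in other words, $s$ does not cross $C_{a,b,\beta}$.

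First I would fix a segment $s\in S$ with $s\cap\operatorname{int}(\Delta)\neq\emptyset$ and pick a point $p\in s\cap\operatorname{int}(\Delta)\subseteq X'$. Since $p$ lies in the open region $X$ bounded by $C_{a,b,\beta}\cup\beta$, the point $p$ is inside $C_{a,b,\beta}$. Now I would trace $s$ outward from $p$ in both directions toward its two endpoints, asking how $s$ can leave the region $X$. The boundary of $X$ consists of the chain $C_{a,b,\beta}$ and the bridge $\beta$, whose chord is $t$ (an edge of $\Delta$). Because $p\in\operatorname{int}(\Delta)\subseteq X'$ and $\Delta$ does not meet any segment of $S_c$, the segment $s$ does not cross $C_{a,b,\beta}$: if it did, $s$ would belong to $S_c$ and would have been removed from $X'$, contradicting $p\in X'$. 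Hence neither direction of $s$ can exit $X$ through $C_{a,b,\beta}$, and the only way $s$ can leave is through the bridge.

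The heart of the matter is then a case split on how $s$ leaves $X$. If $s$ stays entirely inside $X$ (in its closure) in both directions, then both endpoints of $s$ are inside $C_{a,b,\beta}$, which is the first alternative of the statement. Otherwise $s$ exits through the bridge~$\beta$. Here I would use the structure of a bridge recalled in the excerpt: $\beta$ consists of the chord $t$ together with at most two pieces of segments of $S_c$, and by definition a bridge \emph{can only be crossed by a segment through its chord} (the segments are pairwise disjoint, so $s$ cannot cross the portions of $\beta$ lying along other segments). Therefore $s$ must cross $t$ itself. In that case exactly one of the two rays of $s$ emanating from $p$ reaches $t$ and passes to the far side, so the endpoint of $s$ on that side leaves the subproblem region, while the opposite endpoint stays on the $R_{a,b,\beta}$ side and remains inside $C_{a,b,\beta}$. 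This gives the second alternative: $s$ crosses $t$, and its endpoint inside $R_{a,b,\beta}$ is inside $C_{a,b,\beta}$.

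The main obstacle I anticipate is the careful bookkeeping at the bridge, specifically ruling out that $s$ leaves $X$ through a portion of $\beta$ that is part of some other segment $s'\in S_c$ rather than through the chord $t$. This is exactly where pairwise disjointness of the segments is essential: two distinct segments of $S$ cannot cross, so $s$ cannot cross the along-$s'$ pieces of the bridge, forcing any exit to happen at $t$. A secondary subtlety is handling degenerate incidences (e.g.\ $s$ passing through a vertex of $\Delta$ or touching $t$ without properly crossing), but since we only need the conclusion about a segment that meets the \emph{interior} of $\Delta$, and $\Delta$'s interior avoids all of $S_c$, these boundary cases reduce to the clean crossing analysis above.
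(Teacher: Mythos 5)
Your proof is correct and follows essentially the same approach as the paper's: both arguments use that $\operatorname{int}(\Delta)\subseteq X'$ forces the segment to lie outside $S_\mathrm{c}$ (so it cannot cross $C_{a,b,\beta}$), and that pairwise disjointness lets a segment cross the bridge only at the chord $t$, which yields the stated dichotomy. The only minor difference is that, for a segment crossing $t$, the paper concludes containment of the endpoint in $R_{a,b,\beta}$ ``by definition'' (condition~(iii) in the definition of $C^*_{a,b,\beta}$ requires such endpoints to be inside), whereas you re-derive the same fact geometrically by tracing the segment from a point of $\operatorname{int}(\Delta)$; both justifications are valid.
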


%\ShoLong{}{
\begin{proof}
This follows from the properties of $\Delta$ in Lemma~\ref{lem_c_triang}:
A segment intersecting the interior of $\Delta$ is not part of $S_\mathrm{c}$
but has a non-empty intersection with $X$. Therefore, either both of its endpoints
are inside $C_{a,b,\beta}$, or it enters $X$ via $t$ and therefore has its relevant
endpoint inside $C_{a,b,\beta}$ by definition. See \figurename~\ref{fig_nc_lem_c_triang}.
\end{proof}
%}

% Recall that for the choice of the first bridge $y_1 x_1$ of the algorithm, it is clear that the corresponding endpoints of the segment that cross the bridge have to be inside the solution.
% When $\Delta$ is chosen as the triangle adjacent to the bridge, this requirement carries over to the subproblems.
%

\subsection{Getting smaller subproblems}\label{sec_smaller_subproblems}
%
%\rodrigo{I think the next paragraph should be somewhere else, here it does not fit very well. Maybe in the next subsection, since it defines $N$.}
%
Let $A$ be the set of points that are either endpoints of $S$ or crossing points of a segment
and a bitangent (recall that no segment of $S$ is an element of $B$).
%
%
% \rodrigo{This is a technicality, but it pops up in many places, and sooner or later we have to fix it: $\cup_{i \in S}i$ is a set of segments (not of points in $R^2$, so the intersection with $\cup_{j \in B}$ gives the segments that are both in $S$ and in $B$ (in our context, none). I know that we mean the points contained in the segments in $S$ (resp. $B$), but that is not what is written. I think that keeping an informal description (like the one after the ``i.e.'' before) is better than writing up these things  formally.}
% \alex{Yes, I changed to the verbal definition.}
%
Hence, $A$ contains all the points that are possible apices for a triangle
$\Delta$ of Lemma~\ref{lem_c_triang}. 
% \ShoLong{Note that $|A| \in O(|S|^3)$ since $|B| = 4\binom{|S|}{2}$.}
{Note that one may construct subproblems where every possible apex
of $\Delta$ is an endpoint of a segment in $S_\mathrm{c}$, as well as
subproblems where every possible apex is on a point where a segment
crosses $C_{a,b,\beta}$. Further, note that $|A| \in O(|S|^3)$ since $|B| = 4\binom{|S|}{2}$.}

Consider again a subproblem $(a,b,\beta)$. 
If $a_2 = b_2$, then we have reached the end of the recursion, and there are no smaller subproblems to consider.
Otherwise, as in Lemma~\ref{lem_c_triang},
let $t$ be the chord of $\beta$ if a chord exists, or otherwise let $t$  be
the only edge of $\beta$.
%
% \rodrigo{Can't it have only 2 edges?}
% \alex{Yes, that was a mistake.}
%
Let $a_\beta$ be the intersection point of $a$ with the bridge $\beta$;
$b_\beta$ is defined analogously. 
For each subproblem $(a,b,\beta)$, one of the following cases applies, allowing to obtain one or two smaller subproblems.
%that is not a base case (i.e., $a_2 \neq b_2$) 
During the execution of the algorithm we will consider both cases.
% (since we will not know in which of the two we are).

{\bf Case 1: $\bm{t}$ is an edge of the solution, i.e., an edge of $\bm{C_{a,b,\beta}}$.}
This happens when $t$ is a chord that does not intersect the interior of the quadrilateral defined by $a$ and $b$.
%
%
% %\rodrigo{Is this the same as: all remaining cases except Case 3?}
% %\alex{I don't see what you mean. What has it to do with Case~3? The chord of the bridge can be a bitangent or also (like for Type~3) just a segment (not in $S$) between some segment endpoint and a crossing between a segment and a bitangent.}
% %\rodrigo{Isn't Case 3 the only case where the chord may not be a bitangent? If that is so, then your sentence `consider the possibility that $t$ itself is a bitangent' would be the same as `consider cases different from Case 3'. Maybe saying that could help the reader understand what's going on...}
% %\alex{Hm, I guess one could also remove Case 2, as well as Case 1 if one then says that a subproblem is finished as soon as the endpoints of the two defining bitangents meet.
% %Well, actually, one could also say that if one has Case 3, one takes the part of the segment that forms the bridge as a chord.
% %The cases would speed up the process since there is less choice, but on the other hand, it might not change anything with the asymptotic running time.
% %Let's discuss this today.}
%
%
This case is only valid if no segment crosses~$t$, as we require all the
endpoints in $R_{a,b,\beta}$ of segments crossing $t$ to be inside $C_{a,b,\beta}$.
In that case we get at most two new subproblems $(a,\overline{t},\beta_1)$ and
$(t,b,\beta_2)$, where $\beta_1$ is the edge $a_\beta t_1$ and $\beta_2$ is
the edge $t_2 b_\beta$. However, note that one of $(a,\overline{t})$ or $(t,b)$
(or both) might intersect at $a_2$ or $b_2$, respectively, and therefore form
a base case.

%
% If there exists a segment that crosses $t$ and $t$ is an edge of $C^*$, then the solution is composed of different subproblems.
% \rodrigo{I could not understand the last sentence}.
% \alex{I removed it, the thing it should mean actually clearly follows from the whole reasoning up to now}
%

{\bf Case 2: $\bm{t}$ is not an edge of the solution.}
Then there is a triangle adjacent to $t$ as in Lemma~\ref{lem_c_triang}. 
We will guess the apex of the triangle.
For every point $d$
in $A \cap R_{a,b,\beta}$ consider the triangle $\Delta_d$ that $d$ forms with~$t$.
We only consider $d$ if $\Delta_d$ is completely inside $R_{a,b,\beta}$, and
 the interior of $\Delta_d$ does not intersect any segment that intersects $a$ or~$b$.
It follows from Lemma~\ref{lem_c_triang}
that one of the triangles tested leads to a subdivision of the optimal solution.
We get the following two subcases,
see \figurename~\ref{fig_nc_case4}.

{\bf Case 2.1: $\bm{d}$ is a point where a bitangent and a segment cross.}
Let $c$ be the bitangent that contains~$d$. If $c$ equals $a$ or $b$,
then we get one new subproblem $(a,b,\beta')$, with~$\beta'$ containing
a side of $\Delta_d$ that is different from~$t$ as a chord (\figurename~\ref{fig_nc_case4}a). Otherwise, we get two new subproblems,
$(a,\overline{c},\beta_1)$ and $(c,b,\beta_2)$, where $\beta_1$ and $\beta_2$
both contain a side of~$\Delta_d$ (\figurename~\ref{fig_nc_case4}b).

{\bf Case 2.2: $\bm{d}$ is an endpoint of a segment.}
Let $s$ be the segment that has $d$ as its endpoint. Choose a point $x$
where $s$ intersects some bitangent $c$. Then, for every possible choice
of $x$ (which implies the choice of $c$), we get two new subproblems
$(a,\overline{c},\beta_1)$ and $(c,b,\beta_2)$, as in the previous case;
note that for both new bridges, $x=d$ is possible. The degenerate case
where $c$ equals $a$ or $b$ can be handled as in the previous case.
See \figurename~\ref{fig_nc_case4}c-d.

\begin{figure}[tb]
\centering
\includegraphics{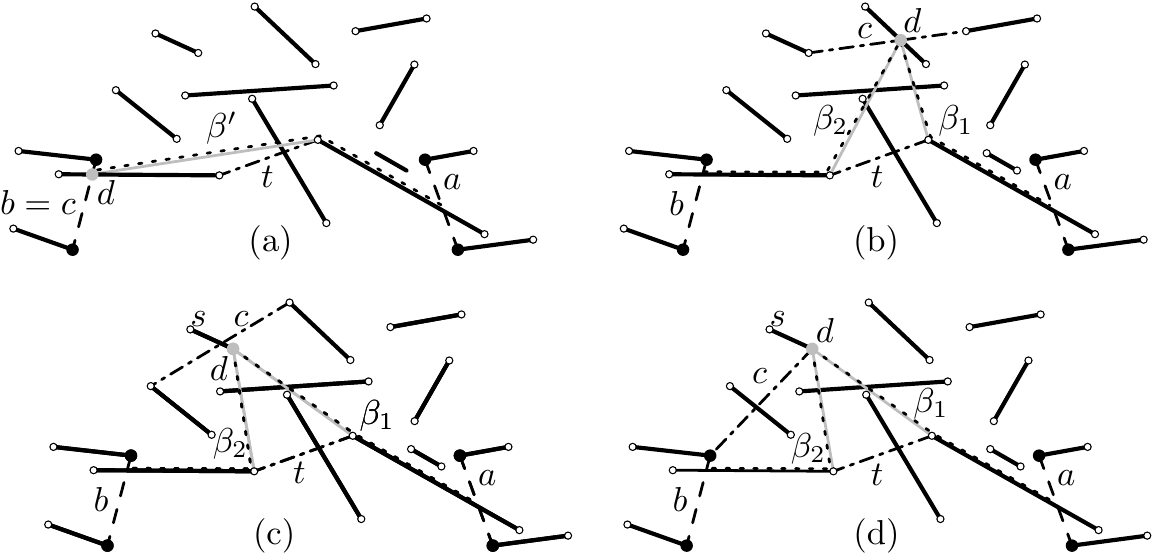}
\caption{Case~2.
The new bridges are dotted. (a)-(b) Case 2.1. (c)-(d) Case 2.2.}
\label{fig_nc_case4}
\end{figure}

In all cases, after considering each case and the associated subproblems, we compute the information about the perimeter of the current solution accordingly.

\begin{lemma}\label{lem_case_4}
Given any valid subproblem $(a,b,\beta)$, there is a pair of subproblems among the ones 
above such that the union of their solutions is equal to~$C_{a,b,\beta}$.
\end{lemma}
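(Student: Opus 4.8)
The plan is to show that the recursive case-analysis of Section~\ref{sec_smaller_subproblems} is \emph{exhaustive} and \emph{correct}: exhaustive in the sense that for the optimal chain $C_{a,b,\beta}$ some branch of the case distinction is taken, and correct in the sense that the two resulting subproblems are well-defined (valid), their subproblem regions tile the part of $R_{a,b,\beta}$ that lies inside $C_{a,b,\beta}$, and the concatenation of their optimal chains reproduces $C_{a,b,\beta}$. First I would invoke Lemma~\ref{lem_c_triang} applied to the fixed optimal solution $C_{a,b,\beta}$: either $t$ (the chord of $\beta$, or its single edge) is already an edge of $C_{a,b,\beta}$, which is exactly the premise of Case~1, or there is a triangle $\Delta$ with the three stated properties whose apex $d$ lies in $A$. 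This dichotomy matches the split into Case~1 and Case~2, so the first step is to verify that the correct case is always reachable by the algorithm's guesses.

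For Case~1 I would argue that when $t$ is an edge of $C_{a,b,\beta}$, no segment of $S$ crosses $t$: any such crossing segment would, by condition~(iii) of the subproblem definition, need both endpoints inside $C^*_{a,b,\beta}$, contradicting that $t$ lies on the boundary. Hence the validity requirement of Case~1 is met, and splitting at the two endpoints of $t$ yields the two chains $C_{a,\overline t,\beta_1}$ and $C_{t,b,\beta_2}$ (one possibly degenerating to a base case). I would then check that every segment relevant to $(a,b,\beta)$ is relevant to exactly one of the two new subproblems, using that $t$ partitions $R_{a,b,\beta}$ and that the bridges $\beta_1=a_\beta t_1$, $\beta_2=t_2 b_\beta$ correctly describe the new separators; so the union of the two optimal subchains is $C_{a,b,\beta}$ and the perimeters add.

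The substantial work is Case~2. Here I would take the triangle $\Delta$ from Lemma~\ref{lem_c_triang}, whose apex $d\in A\cap R_{a,b,\beta}$ is guessed by the algorithm, and split at the appropriate bitangent $c$ through $d$ (Cases~2.1 and~2.2 according to whether $d$ is a crossing point or a segment endpoint). The key claim to establish is that this split is \emph{consistent with condition~(iii)}: the two new bridges $\beta_1,\beta_2$ are chosen so that every cut segment having an endpoint in the new subproblem regions is forced to have that endpoint inside the respective chain. This is precisely the promise made just before Lemma~\ref{lem_c_triang}, and here is where Lemma~\ref{lem_both_inside} does the decisive work: any segment meeting the interior of $\Delta$ either has both endpoints inside $C_{a,b,\beta}$ or crosses $t$ with its relevant endpoint inside, so passing to subproblems that declare these segments ``handled'' (via condition~(iii)) does not lose any optimal configuration. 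I would then verify that $\Delta$ together with the subproblem regions $R_{a,\overline c,\beta_1}$ and $R_{c,b,\beta_2}$ exactly partition $R_{a,b,\beta}\cap C_{a,b,\beta}$, so that each relevant segment is accounted for once and the concatenation $C_{a,\overline c,\beta_1}\cup C_{c,b,\beta_2}$ equals $C_{a,b,\beta}$.

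The main obstacle I anticipate is the bookkeeping around cut segments that cross the chord $t$ of the new bridges but whose ``forced'' endpoint is not yet determined. One must argue that whenever the algorithm's guess of $d$ (and of the crossing point $x$ on the segment in Case~2.2) coincides with the triangle of the \emph{actual} optimal solution, the induced condition~(iii) on the children is satisfied \emph{by} that optimum, so no feasible optimum is discarded; and conversely that the extra constraint never forces a suboptimal choice, because by Lemma~\ref{lem_both_inside} those endpoints were already inside $C_{a,b,\beta}$. Handling the degenerate subcases where $c$ equals $a$ or $b$ (yielding a single child subproblem), and the base case $a_2=b_2$, is routine once the generic argument is in place. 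I would close by noting that since the correct apex is among the $O(|A|)$ guesses and the correct crossing point among $O(|S|)$ choices, the algorithm indeed examines the decomposition guaranteed here, which is what the lemma asserts.
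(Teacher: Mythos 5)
Your proposal is correct and follows essentially the same route as the paper's own proof: both rest on Lemma~\ref{lem_c_triang} for the Case~1/Case~2 dichotomy, on Lemma~\ref{lem_both_inside} to handle segments meeting the interior of~$\Delta$, and on the observation that every guess of $\Delta_d$ yields a feasible stabber (so no guess beats the optimum) while the guess coinciding with the true triangle~$\Delta$ reconstructs $C_{a,b,\beta}$. Your extra detail in Case~1 --- that no segment can cross $t$ when $t$ lies on the boundary, by condition~(iii) and convexity --- only makes explicit what the paper leaves implicit.
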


\begin{proof}
Consider the edge~$t$ of Lemma~\ref{lem_c_triang}.
If $t$ is a chord and part of $C_{a,b,\beta}$, then it will be considered in Case~1.
Otherwise, consider the triangle $\Delta$ inside
$C_{a,b,\beta}$. All segments that are intersected by the interior of $\Delta$ are
either completely contained in $C_{a,b,\beta}$ or enter through $t$ (if it is a chord)
and therefore have their relevant endpoint inside $C_{a,b,\beta}$
(cf.\ Lemma~\ref{lem_both_inside}).
Hence, when the choice of $\Delta_d$ coincides with~$\Delta$, the two subproblems
can be combined into $C_{a,b,\beta}$; the only segments that are part of both subproblems
intersect the interior of~$\Delta$, and we know that both
endpoints will have to be inside the chain that results from the combination of the
solutions of the subproblems. For each possible value of $\Delta_d$ we obtain a stabber, its cost cannot be lower than the one of the optimal solution. Moreover, since we check all possibilities of~$\Delta_d$,
the subproblem combination of minimum cost is guaranteed to be~$C_{a,b,\beta}$.
\end{proof}

This last lemma now implies that we actually find the optimal solution.
% \ShoLong{}
{Note that it is easy to construct a pair of bitangents and a bridge $(a,b,\beta)$
that is part of the optimal solution but for which $C_{a,b,\beta}$ is not part of $C^*$.
However, as mentioned in the outline of the algorithm, we choose the initial problem
$(x,y,\beta_0)$ in a way that $\partial C^* = C_{x,y,\beta_0} \cup \beta_0$.
All segments crossing $\beta_0 = x_1 y_1$ need to have their endpoint above $\beta_0$
inside the solution, and the algorithm actually produces a triangulation of $X'$ when taking
$C^*$ as $\mathcal{Q}$ and $S_c$ being the segments that cross $\partial C^*$ but do not cross $\beta_0$.}

Recall that we initialize the algorithm using a brute-force approach: that is, we consider all the $O(|S|^4)$ possible choices for two defining bitangents and a bridge $a_1b_1$. 
Every subproblem contains less edges of the complete graph on all endpoints of $S$,
and for every subproblem we need polynomial time.
The number of subproblems can be bounded by the choices for $c$ and $d$.
Therefore, dynamic programming can be applied to obtain a polynomial-time algorithm.\footnote{A straightforward analysis of the running time results in $O(|S|^9)$, which probably can be improved. In any case, we consider that our main contribution is that the problem can be solved in polynomial time.}%, more than the running time itself.}

\begin{theorem}\label{teo:main}
Given  a set of pairwise disjoint segments, both the {\sc MinPerSPP} and the {\sc MinAreaSPP} can be solved in polynomial time.
% a Minimum Perimeter Stabbing Polygon (i.e., a minimum perimeter polygon containing at least one endpoint of each segment in $S$) can be computed 
\end{theorem}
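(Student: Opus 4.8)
The plan is to convert the recursive decomposition of Section~\ref{sec_smaller_subproblems} into a memoized dynamic program and to verify that it computes $C^*$. First I would recall that, as argued before Lemma~\ref{lem_c_triang}, the optimal stabber $C^*$ is a convex hull spanned by endpoints of $S$, so each of its edges is a bitangent of $B$. This is what justifies restricting attention to subproblems $(a,b,\beta)$ whose defining chains use only elements of $B$ together with portions of segments and chords whose endpoints lie in $A$; it also guarantees that the table of subproblem solutions $C_{a,b,\beta}$ is well defined.

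Next I would set up the initialization exactly as in the overview: guess an ordered pair of consecutive edges $x,y$ of $C^*$ and set $\beta_0 = x_1 y_1$, so that $\partial C^* = C_{x,y,\beta_0} \cup \beta_0$. There are $O(|S|^4)$ such guesses. For each guess the algorithm evaluates $(x,y,\beta_0)$ by the recursion: if $a_2 = b_2$ we are in the base case with cost $|a|+|b|$; otherwise we identify the chord $t$ of $\beta$ (or its single edge) and branch according to Cases~1, 2.1 and 2.2, each producing one or two subproblems. Here a child subproblem is strictly \emph{smaller} because it spans fewer edges of the complete graph on the endpoints of $S$, which makes the dependency relation acyclic and hence amenable to memoization. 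The value returned for a subproblem is the minimum cost over all admissible branchings, and the global answer is the minimum, over all initial guesses, of the cost of $C_{x,y,\beta_0}$ together with the closing edge $y_1 x_1$.

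Correctness is then essentially Lemma~\ref{lem_case_4}: for every valid subproblem there is a branching whose two children recombine into $C_{a,b,\beta}$, while conversely every branching yields an admissible stabber, so the minimum over branchings equals $C_{a,b,\beta}$; a one-line induction on subproblem size, with base case $a_2=b_2$, shows that the table stores the true optima. The hard part, and the reason the whole subproblem structure is set up as it is, is maintaining condition~(iii) in the definition of $C^*_{a,b,\beta}$: both endpoints of a segment crossing the chord of $\beta$ must lie inside. This invariant is what keeps the two children independent, and it is preserved because, by Lemma~\ref{lem_both_inside}, every segment meeting the guessed triangle $\Delta_d$ either lies wholly inside $C_{a,b,\beta}$ or enters through $t$ with its relevant endpoint inside. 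This is also precisely why the apex $d$ is guessed over all of $A \cap R_{a,b,\beta}$ rather than over segment endpoints only: a cut segment crossing the chord must be forced to contribute \emph{both} endpoints, so that sibling subproblems never make conflicting endpoint choices for it.

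For the running time I would bound the number of distinct subproblems by the choices of the two bitangents ($O(|S|^4)$) times the choices determining the bridge (a chord with endpoints in $A$ together with its at most two incident segment pieces), and bound the work per subproblem by the $O(|A|)=O(|S|^3)$ candidate apices $d$ tried in Case~2 and the $O(|S|)$ bitangents $c$ through a segment. All of these are polynomial, giving a polynomial-time algorithm; a crude count yields $O(|S|^9)$. Finally, the {\sc MinAreaSPP} is handled by the identical decomposition: only the quantity stored in each table entry changes, from the length of $C_{a,b,\beta}$ to the area enclosed by $C_{a,b,\beta}\cup\beta$, and this quantity still decomposes additively under the branching (the guessed triangle $\Delta_d$ simply contributes its own area), so Lemmas~\ref{lem_c_triang}--\ref{lem_case_4} apply verbatim.
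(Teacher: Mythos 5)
Your proposal is correct and follows essentially the same route as the paper: the same subproblem structure $(a,b,\beta)$ with the $O(|S|^4)$ brute-force initialization so that $\partial C^* = C_{x,y,\beta_0}\cup\beta_0$, correctness via Lemma~\ref{lem_case_4} resting on the independence guaranteed by Lemma~\ref{lem_both_inside} and condition~(iii), termination/memoization justified by each child spanning fewer edges of the complete graph on endpoints, the same polynomial (roughly $O(|S|^9)$) accounting, and the same observation that the decomposition carries over additively to area. No meaningful divergence from the paper's argument.
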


%Since our algorithms are combinatorial and only the cost function depends on the geometry of the problem instance, the methods of this Section extend to the other three variants of the problem. 

%
%\rodrigo{Shall we say that we use the real RAM model somewhere in the beginning and remove it from the theorems?}

% Interestingly, the part that relies on the fact that the segments are
% pairwise disjoint is not that obvious. Observe that if two segments would cross,
% we would never consider the case where both are intersected by $C$,
% since if one is in the subproblem defined by a bitangent~$c$ through
% the other, the endpoints are in different subproblems and would both be added.
% This is not a problem with the segments crossing the bridge, as for these the
% subproblems are chosen appropriately.
% 
% \rodrigo{I'm afraid I could not follow the previous paragraph. Would it be possible to put a very small example illustrating how things change when there are crossings?}

%\paragraph*{Remark}

% The running time of the algorithm is highly influenced by the fact that we have
% little information on the optimal solution. For some particular subproblems
% (i.e., when the endpoints of $S$ are in convex position, or when there is a
% line that stabs all segments of $S$), we can speed up our dynamic programming
% algorithm so that it runs in $O(n^2)$ time. Due to space constraints, details
% for these cases are omitted (and will be given in the extended version
% of this paper).

%\paragraph*{Maximization for non-crossing segments}

\subsection{Maximization for pairwise disjoint segments.}\label{sec:maxarea}
Our previous algorithm relies on the fact that the result has minimum perimeter (or area):
this automatically prevents two endpoints of the same segment from being vertices of
the resulting polygon. However, making the algorithm slightly more sophisticated,
we can solve in polynomial time maximization versions of these problems, stated open by L\"offler and van Kreveld~\cite{LK}: 
{\em select exactly one point on each segment in $S$ such that the perimeter (or area) of
the convex hull of the selected points is maximized}.
This result is based on the fact that for the maximum area or perimeter transversal, one needs to consider 
only the endpoints of the segments~\cite[Lemmata~1 and~8]{LK}:

\begin{lemma}[L\"offler, van Kreveld]
\label{lem:LK}
The problem of, given a set of line segments, choosing one point on each line segment
such that the perimeter (or area) of the convex hull of the resulting point set is as large as possible, has a solution in which all chosen points are endpoints of the line segments.
\end{lemma}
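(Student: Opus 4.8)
The plan is to prove the statement by a local exchange argument. Starting from an arbitrary optimal selection, I would push the chosen points to segment endpoints one at a time and show that the objective never decreases. The engine behind the argument is the fact that both the perimeter and the area of the convex hull of a finite point set are \emph{convex} functions of any single one of its points when all the others are held fixed. Since the restriction of a convex function to a line is again convex, and a convex function on a closed segment attains its maximum at an endpoint, each chosen point can be relocated to whichever of its two segment endpoints does not decrease the objective.

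More precisely, let $p_1,\dots,p_n$ be an optimal selection, with $p_i$ lying on segment $s_i=u_iv_i$, and process the indices $i=1,\dots,n$ in turn. At step $i$, holding $p_1,\dots,p_{i-1},p_{i+1},\dots,p_n$ fixed (some already relocated to endpoints, some still interior), I view the objective as a function of $p_i$ ranging over the closed segment $s_i$. By the convexity claim this function is maximized at $u_i$ or $v_i$, so I move $p_i$ to the better of the two; the objective cannot decrease. After all $n$ steps every chosen point is an endpoint and the objective is at least its original optimal value, hence equal to it, giving an optimal solution supported entirely on endpoints.

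It remains to establish the convexity claim, which I regard as the technical heart. For the \textbf{perimeter} I would invoke Cauchy's formula: writing $h_P(\theta)=\max_{r\in P}\langle r,u_\theta\rangle$ for the support function of $P=\{p\}\cup Q$ in the unit direction $u_\theta$, the perimeter of $\operatorname{conv}(P)$ equals $\int_0^{2\pi} h_P(\theta)\,d\theta$. For each fixed $\theta$ the quantity $h_P(\theta)=\max\bigl(\langle p,u_\theta\rangle,\ \max_{q\in Q}\langle q,u_\theta\rangle\bigr)$ is a maximum of affine functions of $p$, hence convex in $p$; integrating a family of convex functions against the nonnegative measure $d\theta$ preserves convexity, so the perimeter is convex in $p$. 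For the \textbf{area} I would instead use the shoelace characterization $2\,\operatorname{area}(\operatorname{conv}(P))=\max_{\pi}\sum_i\bigl(r_{\pi(i)}\times r_{\pi(i+1)}\bigr)$, the maximum ranging over cyclic orderings $\pi$ of $P$ and relying on the classical fact that the convex-position order realizes the maximal signed shoelace area. Each shoelace sum is an affine function of $p$, since $p$ occurs in at most two of its terms; hence the area is a pointwise maximum of affine functions of $p$ and is again convex.

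The main obstacle I anticipate is precisely the area case: the support-function route that works so cleanly for the perimeter breaks down there, because the Cauchy-type area identity $\operatorname{area}=\tfrac12\int (h^2-h'^2)\,d\theta$ carries a concave $-h'^2$ term and is not manifestly convex in $p$. I therefore fall back on the maximal-signed-area description, where the only delicate point is justifying that the convex-hull ordering indeed attains the maximum signed area among all cyclic orderings of the points (so that no self-intersecting ordering can exceed it). Once that classical fact is in hand, the ``maximum of affine functions is convex'' step is immediate, and the exchange argument of the first two paragraphs goes through verbatim for both measures, yielding the claimed endpoint-supported optimum.
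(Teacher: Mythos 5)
Your high-level strategy is the right one, and it is in fact the same one used in the source the paper relies on: the paper itself does not prove this lemma but imports it from L\"offler and van Kreveld~\cite{LK} (their Lemmata~1 and~8), whose proof is exactly an exchange argument driven by convexity of the objective as a function of a single moving point. Your perimeter half is correct and complete: Cauchy's formula writes the perimeter as $\int_0^{2\pi}\max_{r\in P}\langle r,u_\theta\rangle\,d\theta$, each integrand is a maximum of affine functions of $p$ and hence convex in $p$, and integration against a nonnegative measure preserves convexity; a convex function on a closed segment attains its maximum at an endpoint, so the one-point-at-a-time relocation never decreases the perimeter.

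The area half, however, has a genuine gap: the identity $2\,\mathrm{area}(\CH(P))=\max_\pi\sum_i r_{\pi(i)}\times r_{\pi(i+1)}$ over cyclic orderings $\pi$ of $P$ is false, and in both directions. The ``classical fact'' you flag as the delicate point is not a fact: for seven points forming a regular heptagon inscribed in the unit circle, the convex order gives shoelace sum $7\sin(2\pi/7)\approx 5.47=2\,\mathrm{area}(\CH(P))$, while the star order $r_1,r_3,r_5,r_7,r_2,r_4,r_6$ (the $\{7/2\}$ star polygon, which winds twice around the center) gives $7\sin(4\pi/7)\approx 6.82$; a multiply winding order beats the convex one whenever $n\ge 7$, because signed area counts winding number with multiplicity. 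Conversely, when some point of $P$ lies interior to the hull---which genuinely occurs in this problem, since a segment may lie entirely inside the hull of the other selections and still must contribute a chosen point---every cyclic ordering of \emph{all} of $P$ falls strictly short: for $P=\{(0,0),(1,0),(0,1),(\tfrac14,\tfrac14)\}$ the three cyclic orderings give shoelace sums $\tfrac34,\tfrac12,\tfrac34$, all below $2\cdot\tfrac12=1$. So the function $\max_\pi(\cdot)$ you work with is convex but is simply not the hull area, and maximizing it over the segment says nothing about the area objective. The convexity you need is nonetheless true and can be recovered by a correct max-of-affine representation: let $q_1,\dots,q_m$ be the hull vertices of the fixed set $Q$ in counterclockwise order, and for each ordered pair $(i,j)$ let $g_{ij}(p)$ be the area of the subpolygon $q_j,q_{j+1},\dots,q_i$ (closed by the chord $q_iq_j$) plus the signed area $\tfrac12\,(p-q_i)\times(q_j-q_i)$ of the triangle $(q_i,p,q_j)$; each $g_{ij}$ is affine in $p$. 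One checks $g_{ij}(p)\le\mathrm{area}(\CH(Q\cup\{p\}))$ always---trivially when the triangle term is nonpositive, and otherwise because the subpolygon and the counterclockwise triangle lie on opposite sides of the line $q_iq_j$ and both inside $\CH(Q\cup\{p\})$---with equality for the tangency pair when $p\notin\CH(Q)$. Hence $\mathrm{area}(\CH(Q\cup\{p\}))=\max\bigl(\mathrm{area}(\CH(Q)),\,\max_{i,j}g_{ij}(p)\bigr)$ is a maximum of finitely many affine functions, so it is convex in $p$, and with this substitution your exchange argument goes through verbatim for both measures.
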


\begin{theorem}\label{thm:maximize}
%Let $S$ be a set of pairwise non-crossing line segments.
Given  a set of pairwise disjoint segments, the {\sc MaxPerSPP} and the {\sc MaxAreaSPP} can be solved in polynomial time.
\end{theorem}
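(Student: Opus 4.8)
The plan is to reduce both maximization variants to the endpoint-selection problem already handled in Theorem~\ref{teo:main} and then to run essentially the same dynamic program with the optimization direction reversed. By Lemma~\ref{lem:LK}, for each of the {\sc MaxPerSPP} and the {\sc MaxAreaSPP} there is an optimal solution in which every chosen point is an endpoint of its segment; so the task becomes, exactly as in the minimization case, to select one endpoint of each segment so that the convex hull of the selected endpoints has maximum perimeter (respectively area). In particular the optimal region is again a convex polygon whose vertices are endpoints of segments of $S$ and whose edges lie in $B$, and therefore the whole scaffolding of subproblems $(a,b,\beta)$, subproblem regions $R_{a,b,\beta}$, and the triangulation-based decomposition of $X'$ is available unchanged.

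First I would verify that every structural ingredient of the minimization algorithm is oblivious to whether we minimize or maximize. The triangulation of $X'$ and the existence of the triangle~$\Delta$ in Lemma~\ref{lem_c_triang}, the containment statement of Lemma~\ref{lem_both_inside}, and the recursion into at most two smaller subproblems (Cases~1, 2.1, and~2.2 of Section~\ref{sec_smaller_subproblems}) rely only on the convexity of the stabber and the pairwise disjointness of the segments; none of them uses optimality of the objective. Hence the recursion and the $O(|S|^4)$ initialization carry over verbatim, and at each combination step we simply retain the candidate of \emph{maximum} cost instead of minimum cost. The asymptotic running time is unaffected.

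The hard part, and the only place where minimality was used essentially, is condition~(iii) in the definition of $C^*_{a,b,\beta}$ together with the closing argument of Lemma~\ref{lem_case_4}. There, forcing both endpoints of every chord-crossing segment to lie inside the polygon is declared harmless precisely because \emph{no minimum-perimeter (or area) solution places both endpoints of one segment on its convex hull}. Under maximization this property is lost: a larger hull could, a priori, gain from using both endpoints of some segment as vertices, which the selection model forbids, since exactly one point per segment is chosen. My plan is therefore to make the one-point-per-segment constraint explicit in the dynamic program. Because every hull vertex is introduced either as the guessed apex~$d$ of a triangle (Case~2.2) or as an endpoint of a defining bitangent, a segment contributing two vertices can be detected locally when two subproblems are combined; I would assign cost $-\infty$ to any combination whose boundary chain uses both endpoints of a common segment as vertices. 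I would likewise revisit condition~(iii): since in the selection model only the chosen endpoint needs to lie inside, the requirement that a chord-crossing segment have \emph{both} endpoints inside must be re-derived, or suitably relaxed, so that it neither excludes the true optimum nor breaks the independence of the two subproblems.

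Finally I would argue that these adjustments are correct. Necessity of the explicit check is immediate, since two hull vertices from one segment would correspond to selecting two points on it. For sufficiency, note that the global optimum selects a single endpoint of each segment, so at most one endpoint of each segment is a hull vertex; consequently, along the sequence of subproblem combinations that tracks the true optimum, the forbidden configuration never occurs, and discarding the conflicting combinations removes only infeasible candidates while preserving the optimum. Combined with the direction-agnostic correctness of Lemmas~\ref{lem_c_triang}--\ref{lem_case_4}, this yields the maximum-perimeter or maximum-area stabber in polynomial time and establishes Theorem~\ref{thm:maximize}. I expect the main obstacle to be exactly the re-examination of condition~(iii): ensuring that the treatment of chord-crossing segments remains both optimum-preserving and independence-preserving once minimality can no longer be invoked.
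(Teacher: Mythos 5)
Your reduction via Lemma~\ref{lem:LK} and your observation that the structural lemmas (Lemmas~\ref{lem_c_triang}--\ref{lem_case_4}) are insensitive to the optimization direction match the paper's proof, and you have correctly located the crux: condition~(iii) and the treatment of segments crossing a bridge can no longer be justified by minimality. But at exactly that point your proposal stops short, and the mechanism you sketch does not work. If you keep condition~(iii) (chord-crossing segments have both endpoints inside, never as vertices), the dynamic program can never reconstruct an optimal hull in which a segment crossing the initial bridge~$\beta_0$ (or the chord of a later bridge) contributes its other endpoint as a hull vertex --- a configuration that is entirely possible for maximization --- so the optimum can be missed. If instead you drop condition~(iii) and rely on assigning $-\infty$ to combinations that use both endpoints of one segment, you break the independence on which the dynamic program rests: each subproblem memoizes only its single best chain, so when the best chain of $(a,\overline{c},\beta_1)$ uses endpoint $p$ of a chord-crossing segment and the best chain of $(c,b,\beta_2)$ uses the other endpoint $q$, discarding that combination does not give you access to the second-best chain avoiding~$p$; the constraint is non-local, and simple memoization becomes unsound. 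Your sufficiency argument (``the sequence of subproblem combinations that tracks the true optimum never hits the forbidden configuration'') presupposes that such a tracking decomposition exists under your constraints, which is precisely what is in doubt.

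The paper closes this gap with an idea absent from your proposal: a two-phase initialization. In the first phase, for every segment~$s$ it guesses a bitangent~$a$ crossing~$s$ and a bitangent~$b$ sharing an endpoint with~$s$, and starts the recursion from the pair of subproblems $(a,b,\beta)$ and $(\overline{b},\overline{a},\beta)$ whose common bridge~$\beta$ is the piece of~$s$ between $a$ and~$b$. This enumerates exactly those optima in which a segment crossing a hull edge has its other endpoint as a hull vertex. Consequently, in the second phase (the usual initialization with three consecutive bitangents) the algorithm may legitimately forbid any subproblem solution from using, as a vertex, an endpoint of a segment crossing the current bridge --- every solution so excluded was already found in phase one --- and in the recursion it restricts the admissible bitangents~$c$ (no common endpoint with a segment crossing~$\beta$, nor with a segment whose other endpoint lies on $a$ or~$b$), redefines $S_c$ as the segments \emph{intersecting} rather than crossing the chain, and enlarges the candidate set in Case~2.2 so that a segment wholly contained in the hull may still contribute a vertex. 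These are the steps your plan would need and does not supply.
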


\begin{proof}
Due to Lemma~\ref{lem:LK}, we know that we only need to consider the endpoints of the segments.
We modify the algorithm used for the minimization version of the problem.
Note that the structure of the solution is very similar.
Again, let $C^*$ be the optimal solution.
One main difference is that a segment that has an endpoint as a vertex of $C^*$ might have the other endpoint in the interior of $C^*$, i.e., might be completely contained in it.
We define subproblems and bridges in the same way.
The crucial property in the previous algorithm was that a segment that entered a subproblem region through the chord of the bridge had to have its endpoint that was inside the subproblem region to be inside the solution of the subproblem as well.
This was due to Lemma~\ref{lem_both_inside} and the choice of the initial bridge $\beta_0$;
a segment that enters the subproblem region through the chord of the bridge can be of one of two types:
it either crosses~$\beta_0$, thus one of its endpoints is considered to be oustide of~$C^*$,
or it does not cross~$\beta_0$, and thus both of its endpoints are considered to be in the interior of~$C^*$ (recall Lemma~\ref{lem_both_inside}).
For Theorem~\ref{teo:main}, it was not necessary to distinguish between these two types of segments (in the minimization version, a construction using both endpoints of a segment would be considered valid, but the minimum solution would never contain two such points).
However, now we need to take this into account.

Instead of only guessing three consecutive bitangents that form the initial problem $(a,b,\beta_0)$, we may choose two ``opposite'' bitangents in the following way:
For every segment~$s$, guess two bitangents $a$ and $b$ such that $a$ crosses $s$ and $b$ has a common endpoint with $s$.
This defines two subproblems $(a,b,\beta)$ and $(\overline{b}, \overline{a}, \beta)$, where $\beta$ is the part of $s$ connecting $a$ with $b$, which can be combined to an overall solution;
see \figurename~\ref{fig_nc_maximize_initial}.
We call this the \emph{first phase} of the algorithm.
Afterwards (in the \emph{second phase}), we guess three consecutive bitangents to form $(x,y,\beta_0)$ as before.
All endpoints of segments crossing the chord of a bridge then have to be in the interior of the solution.
We explicitly do not allow the solution to a subproblem to have an endpoint of a segment that crosses the current bridge as a vertex.
Hence, there might be subproblems for which no solution is possible in that way.
All solutions that would contain an endpoint of a segment entering through~$\beta_0$ as a vertex are already found when during the first phase for the following reason.
Suppose there would be a solution with a vertex~$w$ being the endpoint of a segment~$e$ crossing~$\beta_0$.
Then during the first phase we already guessed a bitangent~$a$ that equals~$\beta_0$ (as $\beta_0$ is also a bitangent), and a bitangent~$b$ incident to $w$, with the segment~$e$ being~$s$ and vertex~$w$ being~$v$ (see again \figurename~\ref{fig_nc_maximize_initial}).
Hence, such a solution was already found in the first phase, and the only solutions we still need to consider are the ones where the endpoint of any edge~$e$ crossing $\beta_0$ is not a vertex of the convex hull.

Recall the proof of Lemma~\ref{lem_c_triang}.
If we replace $S_c$ by the set of the segments that \emph{intersect} $C_{a,b,\beta}$, the analogous result follows. Following the proof of Lemma~\ref{lem_both_inside}, we observe that the segments not in $S_c$ have both endpoints in the interior of the solution.
Therefore, the choice of the bitangent $c$ that gives new subproblems for a subproblem $(a,b,\beta)$ can be altered in the following way.
If $c$ shares an endpoint with a segment that has its other endpoint on $a$ or~$b$, then $c$ is not valid.
Further, $c$ must not share an endpoint with a segment that crosses~$\beta$ (however, the requirement that all endpoints in $R_{a,b,\beta}$ of segments that cross $\beta$ have to be inside the subproblem solution persists).
% \rodrigo{Is ``$c$ contains a point $x$'' the same as ``an endpoint of $c$ is point $x$? If find talking about ``containing'' rather confusing here}

Our modification therefore only concerns the selection of $c$ in Case~2.
In both subcases, the choice for the bitangent~$c$ is restricted to the ones that do not share an endpoint with a segment crossing~$\beta$, and that do not share an endpoint with a segment sharing the other endpoint with $a$ or $b$.
In Case~2.2, we have more potential candidates for~$c$: the point~$x$ can also be the endpoint of the segment that is not~$d$ (recall that the solution might completely contain a segment that contributes a vertex to it), in which case $c$ is a bitangent that has~$x$ as an endpoint.
With this variation, we never select both endpoints of a segment but still find (a triangulation of) the optimal solution.
\end{proof}

% 
% 
% In the following we describe an alternative method to initialize the algorithm. This approach will be useful in the upcoming section. %There is a variant of the algorithm described in the proof of Theorem~\ref{thm:maximize} that will be useful in connection with the next section.
% Instead of guessing three consecutive bitangents that form the initial problem $(a,b,\beta_0)$, we may choose two ``opposite'' bitangents in the following way:
% For every segment~$s$, guess two bitangents $a$ and $b$ such that $a$ crosses $s$ and $b$ has a common endpoint with $s$.
% This defines two subproblems $(a,b,\beta)$ and $(\overline{b}, \overline{a}, \beta)$, where $\beta$ is the part of $s$ connecting $a$ with $b$, which can be combined to an overall solution;
% see \figurename~\ref{fig_nc_maximize_initial}.
% Afterwards, guess three consecutive bitangents to form $(a,b,\beta_0)$ as before.
% All endpoints of segments crossing the chord of a bridge then have to be in the interior of the solution.
% With this variation, we do not need to consider the set $S_0$, we simply do not allow the bitangent $c$ to share an endpoint with a segment that crosses the current bridge.
% \alex{there might be subproblems for which no solution is possible.}
% 
% \rodrigo{Why don't we use this new initialization directly? Does it have any disadvantage over the current one?}

\begin{figure}
\centering
\includegraphics{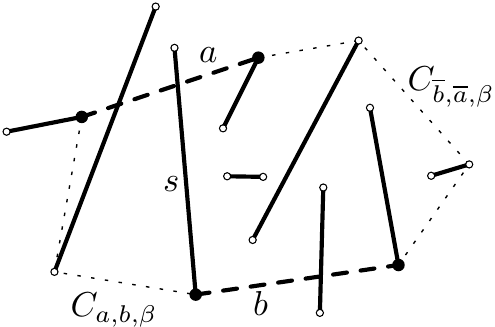}
\caption{A segment $s$ and two bitangents $a$ and~$b$ are chosen such that $s$ is the bridge connecting $a$ and~$b$.
Starting with this setting means that, in the maximization algorithm, the endpoint of a segment crossing the chord of a bridge cannot be a vertex of the maximum polygon.}
\label{fig_nc_maximize_initial}
\end{figure}

\section{Islands of points}\label{sec:islands}
A segment can be considered as the convex hull of two points.
From this point of view, we give a generalization of the algorithm for non-crossing segments to families of point sets whose convex hulls do not intersect pairwise.
The algorithm can be applied to the four variants of the problem (i.e. maximization/minimization of area/perimeter). 

Let $P$ be a set of $n$ points in the plane.
A \emph{cluster} is any subset of~$P$.
An \emph{island} $I \subset P$ is a cluster of $P$ such that $\CH(I) \cap P = I$;
see, e.g.,~\cite{bautista2011}. 
A pair of islands $(I_1, I_2)$ is called \emph{disjoint} if $\CH(I_1) \cap \CH(I_2) = \emptyset$.
Let $S_P$ be a set of islands partitioning a point set $P$.
Analogously to a set of segments, we say that an island $I \in S_P$ is \emph{stabbed} by a polygon $\mathcal{P}$ if one point of~$I$ is contained in $\mathcal{P}$, and $S_P$ is \emph{stabbed} by $\mathcal{P}$ if every island of $S_P$ is stabbed by $\mathcal{P}$.
In this section we show how to extend our algorithm for disjoint line segments to disjoint islands.

As in the previous section, consider a polygon $\mathcal{Q}$ spanned by $P$ and stabbing $S_P$.
Let $S_\mathrm{c}$ be the set of islands in~$S_P$ that intersect $\partial \mathcal{Q}$ \emph{at least} once.
Observe that, again, this definition of $S_c$ is slightly different from those 
considered for the previous problems.
%Note that this differs from the minimization case of line segments, just like in the proof of Theorem~\ref{thm:maximize} 
As shown in \figurename~\ref{fig_island_triangulation}, if an island is not a segment, $\partial \mathcal{Q}$ might intersect it several times and $\mathcal{Q}$ still contains a point of the island.
Let $X$ be the interior of $\mathcal{Q}$ and let $X' = X \setminus \bigcup_{I \in S_\mathrm{c}} \CH(I)$ (observe that this time the closure of $X'$ might be different from $\mathcal{Q}$, as the removed parts might have non-zero area).

The vertices of $X'$ are (i) vertices of the convex hulls of the islands that intersect with $\mathcal{Q}$, and (ii) the points where $\partial \mathcal{Q}$ crosses the convex hull boundary of islands in $S_P$. 
Note that the vertices of $\mathcal{Q}$ are a subset of~$P$, but they might not be on the convex hull of an island.
We say that $\mathcal{Q}$ \emph{crosses} an island~$I$ if an edge of~$\mathcal{Q}$ crosses an edge of~$\CH(I)$.
If an island contains only two points, we again consider $X'$ being ``slightly split'' at the 1-dimensional part corresponding to the convex hull of that island.
See \figurename~\ref{fig_island_triangulation}.

\begin{figure}
\centering
\includegraphics{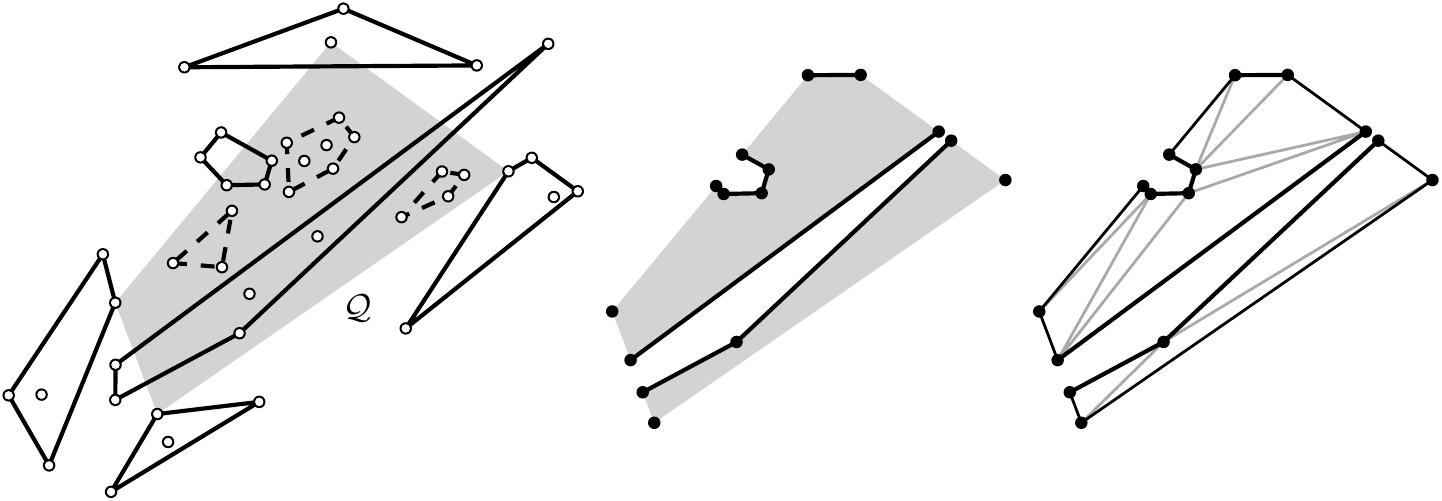}
\caption{Analogous to the algorithm for segments, there is a triangulation of the interior of a polygon with the convex hulls of the islands removed.}
\label{fig_island_triangulation}
\end{figure}

Let us first state a property of the maximization variant of the problem that also holds for general clusters.
The following is a generalization of Lemma~\ref{lem:LK} to clusters of points.

\begin{proposition}\label{prop_cluster_hull}
Given a set of clusters, there always exists a maximum (area or perimeter) stabbing polygon using only the points on the convex hull boundaries of the clusters.
\end{proposition}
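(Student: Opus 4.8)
The plan is to run the same exchange argument that underlies Lemma~\ref{lem:LK}, but to carry it out one cluster at a time through a convexity observation. Recall that in the maximization setting a feasible solution is obtained by selecting one point $p_i$ from each cluster $C_i$ and taking the convex hull of the selected points; stabbing is then automatic, since each $p_i$ lies in that hull. Hence it suffices to show that among the selections maximizing the chosen measure (area or perimeter) there is one in which every $p_i$ lies on $\partial\,\CH(C_i)$.

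First I would isolate the key analytic fact: for a fixed finite point set $Q$, the map $p \mapsto \mu(\CH(Q \cup \{p\}))$, where $\mu$ denotes either area or perimeter, is a convex function of $p \in \mathbb{R}^2$. For the perimeter this follows from Cauchy's formula, writing the perimeter as $\int_0^{2\pi} h(\theta)\,d\theta$ with support function $h(\theta) = \max\bigl(h_{\CH(Q)}(\theta),\, \langle p, (\cos\theta,\sin\theta)\rangle\bigr)$, which is for each $\theta$ a pointwise maximum of a constant and a linear function of $p$, hence convex in $p$; integration over $\theta$ preserves convexity. For the area one checks that, within each cell of the planar subdivision determined by which two vertices of $\CH(Q)$ are tangent to the lines from $p$, the extra ``cap'' area is an affine function of $p$, and that the slopes increase as $p$ crosses cell boundaries, so the area is convex as well.

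Given this, I would start from any optimal selection $p_1,\dots,p_k$ and process the clusters in turn. For cluster $C_i$, fix $Q = \{p_j : j \neq i\}$ and consider $f(p) = \mu(\CH(Q \cup \{p\}))$ restricted to $p \in \CH(C_i)$. Since $f$ is convex and $\CH(C_i)$ is a compact convex polytope, $f$ attains its maximum over $\CH(C_i)$ at a vertex $v$ of $\CH(C_i)$; such a vertex is a point of $C_i$ that lies on $\partial\,\CH(C_i)$. Because $p_i \in C_i \subseteq \CH(C_i)$, we have $f(v) \ge f(p_i)$, so replacing $p_i$ by $v$ does not decrease $\mu$ and keeps the selection feasible (it is still one point per cluster). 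Replacing each interior selected point this way moves it onto the convex hull boundary of its cluster and never moves a previously treated point off its boundary, so after a single pass every selected point lies on a convex hull boundary while $\mu$ has not decreased. As we began at an optimum, the resulting selection is still optimal, which proves the claim.

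The main obstacle is the convexity fact of the second paragraph, and specifically the area case, where the cell-by-cell decomposition and the monotonicity of the slopes across cell boundaries must be verified with care; the perimeter case is immediate from the support-function description. Everything else is a routine extreme-point argument, and note that the pairwise disjointness of the clusters is nowhere used, consistent with the proposition being stated for arbitrary clusters.
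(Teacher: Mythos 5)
Your proof is correct, but it takes a genuinely different route from the paper's. The paper deduces the proposition from Lemma~\ref{lem:LK} itself: assuming an optimal polygon has a vertex $p$ interior to $\CH(C)$, it intersects the supporting line of $pq$ (for an extreme point $q$ of $C$) with $\CH(C)$ to obtain a segment $s$, models every other polygon vertex as a zero-length segment, and applies Lemma~\ref{lem:LK} twice --- once to $s$, and once to the hull edge $s'$ containing the resulting endpoint $\tilde{q}$ --- to reach a contradiction with optimality. You instead prove the statement from scratch via a convexity principle: with all other selected points fixed, $p \mapsto \mu(\CH(Q\cup\{p\}))$ is convex, so its maximum over the polytope $\CH(C_i)$ is attained at a vertex of $\CH(C_i)$, and a cluster-by-cluster exchange finishes the argument. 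Your route is self-contained (it never invokes Lemma~\ref{lem:LK}, and in fact reproves it, since a segment is the hull of its two endpoints), treats area and perimeter uniformly, and makes the underlying geometry explicit; the paper's route is shorter given that Lemma~\ref{lem:LK} is available as a black box, which is precisely how the authors use it. The one soft spot you yourself flag --- convexity of the area --- is true and can be closed more cleanly than by your cell decomposition: for each directed edge $e=(v_i,v_{i+1})$ of the counterclockwise hull $\CH(Q)$, let $g_e(p)=\frac{1}{2}\det(v_{i+1}-v_i,\,p-v_i)$ be the signed triangle area, an affine function of $p$ that is negative exactly when $e$ is visible from $p$. Since the cap added by an exterior point $p$ triangulates from $p$ over exactly the visible edges, one gets $\mathrm{Area}(\CH(Q\cup\{p\}))=\mathrm{Area}(\CH(Q))+\sum_{e}\max\bigl(0,-g_e(p)\bigr)$, a sum of convex piecewise-affine functions, hence convex; with this substitution your argument is complete.
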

\begin{proof}
Recall that, by Lemma~\ref{lem:LK}, when the points are chosen on line segments there is always a maximum stabbing polygon having the vertices on the endpoints of the segments~\cite{LK}.
Suppose there exists no maximum polygon with all vertices being extreme points of their clusters, and consider an optimal solution~$\mathcal{P}$, which has a point~$p$ in the interior of the convex hull of a cluster~$C$ as a vertex.
Pick any extreme point $q$ of $C$ (i.e., any vertex of $\CH(C)$) and let $s$ be the line segment that is defined by the intersection of the supporting line of $pq$ and $\CH(C)$.
Hence, one endpoint of $s$ is $q$ and the other endpoint, $\tilde{q}$, is on $\partial \CH(C)$, but is not an element of~$C$.
Let $S'$ be the set of line segments consisting of $s$ and one zero-length segment for each point in $\mathcal{P} \setminus \{q\}$. 
Applying Lemma~\ref{lem:LK} to $S'$ we conclude that there has to exist a larger polygon~$\mathcal{P}'$ containing an endpoint of $s$, and due to our assumption, this cannot be~$q$.
Thus, it has to be $\tilde{q}$. 
But $\tilde{q}$ is contained in an edge $s'$ of $\CH(C)$.
Again, we can define another set of line segments $S''$ that contains $s'$ and a zero-length segment for each point in $\mathcal{P}' \setminus \{ \tilde{q} \}$, and apply Lemma~\ref{lem:LK} to conclude that there has to exist a solution larger than $\mathcal{P}'$, and thus larger than $\mathcal{P}$, containing an endpoint of $s'$.
But any endpoint of $s'$ is an extreme point of $C$, a contradiction with the optimality of $\mathcal{P}$.
\end{proof}

Now consider again a set~$S_P$ of pairwise-disjoint islands.
Let $T$ be a triangulation of $X'$; $T$ again defines a set of chords that partitions $X'$.
An endpoint of a chord is either a vertex of the convex hull of some island, or the intersection of $\partial \mathcal{Q}$ with the convex hull boundary of some island.
% Each subproblem is defined by two islands and a chord that connects these two, as well as two edges of a stabbing polygon of which each intersects a different one of the two islands.
Given the set $B$ of all segments spanned by points of~$P$ that are part of different islands in $S_P$, and the crossings of segments in $B$ and edges of the convex hulls of the islands, we can apply the same algorithm as for segments:
Each subproblem is defined by two segments of $B$ that potentially define a stabber, and a bridge that is defined by either the convex hull of one island, or the two convex hulls of two islands and a chord (where the latter case also covers bridges where only one point of each convex hull is part of the bridge).
By the definition of $S_\mathrm{c}$, we assume that no island whose convex hull intersects the chord of a bridge intersects the boundary of the solution to the current subproblem.

\subsection{Structure of the subproblems}
When dealing with segments, the structure of a subproblem $(a,b,\beta)$ allowed to identify the chosen endpoints of the segments that formed $\beta$.
This aspect is more complicated when dealing with islands.

% Consider first the case where the bridge has a chord~$t$.
% By construction, $t$ is defined by two points (not necessarily in $P$) on $\partial \CH(I_1)$ and $\partial \CH(I_2)$, $I_1, I_2 \in S_P$.
Consider first the case where the bridge has a chord~$t$, and let~$I_1$ and~$I_2$ be the two islands that define~$t$ and whose convex hulls intersect~$a$ and~$b$, respectively.
The endpoints of~$t$ are on $\partial \CH(I_1)$ and $\partial \CH(I_2)$, but they are not necessarily points of~$P$.
% % By construction, we know that the chord is completely contained in the subproblem solution. Thus, if an endpoint of the chord is in~$P$ we are certain that the corresponding island already is in the solution (hence it can be ignored for the minimization version). The other case happens when one of the two islands that define $t$, say $\CH(I_1)$, is not a point of~$P$ (and therefore is at a crossing between $a$ and $\partial \CH(I_1)$. In this case, we know that the intersection of $\CH(I_1)$ with the solution is either completely inside or outside the subproblem region, due to convexity.
% % Otherwise, if the endpoint of $t$ on $\CH(I_1)$ is a point of $I_1$, we know that a point of $I_1$ is contained in $C^*_{a,b,\beta}$.
% % However, if we consider the maximization variant of our problem setting, we do not know whether we are allowed to pick a point of $I_1$ when constructing $C_{a,b,\beta}$;
Consider the endpoint of~$t$ at~$I_1$.
If this endpoint is also on~$a$ (because either an endpoint of $a$ is in~$I_1$ or $\partial \CH(I_1)$ intersects the interior of~$a$), then it is clear whether a point in~$I_1$ is picked inside the subproblem region or not, due to convexity ($C^*_{a,b,\beta}$ either enters or leaves $\partial \CH(I_1)$ at that endpoint of~$t$).
The analogous holds for $I_2$, see \figurename~\ref{fig_island_no_choice}.
Otherwise, if that endpoint of~$t$ is a vertex of $\CH(I_1)$ and not on~$a$, the algorithm has to make the decision of whether to pick a point of $I_1$ for $C_{a,b,\beta}$ or not.
For the minimization variant of the problem, the endpoint of $t$ in $I_1$ is always an optimal choice.
But for the maximization variant, the algorithm cannot locally decide whether it is better to have a point of $I_1$ on $C_{a,b,\beta}$ or on $C_{\overline{b}, \overline{a}, \beta}$ when solving the subproblem.
Again, the analogous holds for~$I_2$;
% 
% If neither endpoint of~$t$ is a point in~$P$, we know for each of $I_1$ and $I_2$ whether it is contained inside or outside of the subproblem region, due to convexity (the convex hull boundary of the island either enters or leaves $C^*_{a,b,\beta}$ at that endpoint of~$t$).
% Otherwise, suppose, w.l.o.g., that the endpoint of~$t$ at $I_1$ is actually a vertex of $\CH(I_1)$.
% We know, since $t$ is inside the subproblem solution by construction, that this endpoint is inside the subproblem solution, and therefore, there is no reasonable choice of another point of~$I_1$ in the minimization variant of the algorithm.
% However, for the maximization variant, the algorithm may pick a point of~$I_1$ to be part of $C_{a,b,\beta}$ or not;
see \figurename~\ref{fig_island_unknown_bridge}.
We solve this in the following way.
Recall that, due to Proposition~\ref{prop_cluster_hull}, we only need to consider the vertices of the convex hull of~$I_1$ for the maximization variant.
When the algorithm has to divide a subproblem into two further subproblems, and has to pick a point of an island~$I_1$, it passes a parameter to one of the two subproblems indicating that a point of~$I_1$ being incident to the subproblem's region has to be picked, and afterwards tests the same subproblem combination, this time indicating that the point of $I_1$ belongs to the other subproblem.

The case where the bridge consists of only one island~$I$ (and hence does not contain a chord; see \figurename~\ref{fig_island_no_chord}) is similar. 
In this case, the same issue arises for both the maximization and minimization variant;
we do not know whether the selected point of~$I$ has to be inside the subproblem region or not.
However, this can also be indicated to the subproblem with a single flag.

Summing up, a subproblem is defined by the following elements:
\begin{itemize}
 \item the two segments $a$ and $b$, and
 \item the bridge~$\beta$, consisting of\begin{itemize}
  \item two islands $I_1$ and $I_2$ or a single island~$I$,
  \item a chord~$t$ (if there are two islands in the bridge), and
  \item a flag for each given island. Each flag indicates whether a point of $I_1$ or $I_2$ (or~$I$) is picked for the solution of the subproblem or not.
 \end{itemize}
\end{itemize}
Even though the subproblem definition became more complex by the generalization, the number of subproblems is still polynomial in~$|P|$:
Conceptually, a subproblem can be guessed in the following way:
We pick two pairs of points from $P$ representing $a$ and $b$.
For $a$, we pick either another point~$p$ from $P$, contained in some island $I_1$ intersecting $a$, or one of the $O(n)$ edges on the convex hull of an island $I_1$.
In the first case, we suppose that $p$ is the endpoint of the chord~$t$, in the second case $t$ ends in the intersection of $a$ with the guessed convex hull edge of~$I_1$.
The same is done for~$b$.
The case where the bridge consists of a single island~$I$ requires only guessing $a$, $b$, and $I$.
Hence, there are $O(|P|^6)$ subproblems, as in the previous section (where all islands had two elements).

% However, if the bridge does not have a chord, we do not know for a subproblem whether we have to include a point of the islands that form the bridge or not (if there is a chord, the endpoints of the chord are picked anyway if they are part of~$P$, or otherwise the convex hull of the corresponding island has its intersection with the solution either entirely inside or outside the subproblem region);
% see \figurename~\ref{fig_island_unknown_bridge}.
% However, this can easily be figured out from the settings of which such a subproblem is part of.
% We therefore calculate two solutions for such subproblems, one that has to contain a point of the bridge island and one that does not.\mati{This paragraph should have many more details. If I understood correctly it is the main difference}

\begin{figure}
\centering
\includegraphics{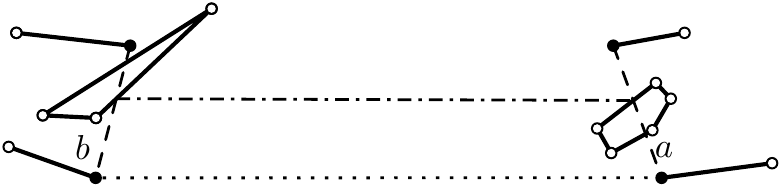}
\caption{An example of a case where the bridge determines whether a point of an island defining the bridge needs to be picked inside the subproblem region or not.}
\label{fig_island_no_choice}
\end{figure}

\begin{figure}
\centering
\includegraphics{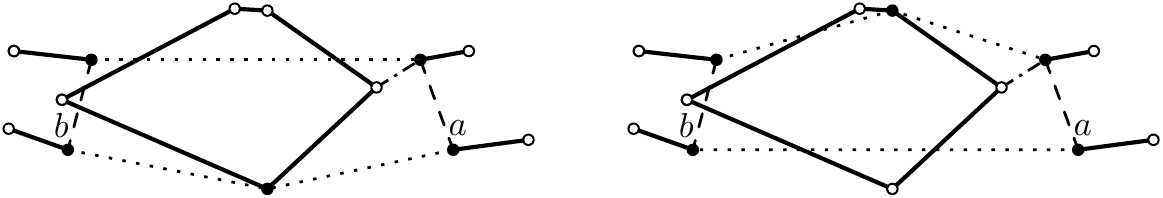}
\caption{For every island contained in the bridge, we actually have two cases: one where a point of the island has to be contained in the subproblem solution, and one where it must not contribute a vertex. This cannot be determined locally, so we consider both cases.}
\label{fig_island_unknown_bridge}
\end{figure}

\subsection{Choice of the subproblems}
Choosing the subproblems is also more sophisticated for islands than for line segments.
Again, we want to choose a triangle $\Delta_d$ for each subproblem $(a,b,\beta)$ by guessing the apex of the triangle and the edge $c$ defining it.
When dealing only with segments, $\Delta_d$ was attached to the edge $t$ of the bridge~$\beta$.
If the bridge contains a chord $t$, the cases are the same: either $t$ is part of $C_{a,b,\beta}$, or, for some choice of $d$, $\Delta_d$ is a triangle in a triangulation of $C_{a,b,\beta}$.
Unlike when dealing only with segments, we must also consider the case where $\beta$ consists of a part of the convex hull boundary of a single island~$I$.
(It may even occur that $\beta$ is intersected by $C_{a,b,\beta}$ at another bitangent $c$ of $C_{a,b,\beta}$;
note that $c$ might or might not have an endpoint in~$I$.)
Since $\beta$ is not a single edge, we need a different well-defined way to choose the edge of triangle~$\Delta_d$ not containing~$d$.
% We therefore first check all possible bitangents $c$ that intersect $\beta$, these give us subproblems whose bridges are defined solely by parts of $\beta$.
% Then, we check all triangles that share an edge with~$\beta$.
Observe that both points $a_2$ and $b_2$ cannot be inside $\CH(I)$, as this would mean that two points of $I$ are chosen for the boundary of the solution.
W.l.o.g., let $a_2$ be outside $\CH(I)$.
Let $a_\beta$ be the point on $a$ intersecting $\partial\CH(I)$ that is closer to~$a_2$.
Then there is a part of $a$ starting at $a_\beta$ towards $a_2$ that is not contained inside the convex hull of any island.
If $a$ is an edge of the optimal polygon, this part of $a$ is an edge of $X'$.
Therefore, we can choose it as the base edge $t$ of $\Delta_d$, and $\Delta_d$ is part of a triangulation of the optimal polygon for some choice of~$d$.
See again \figurename~\ref{fig_island_no_chord}.

\begin{figure}
\centering
\includegraphics{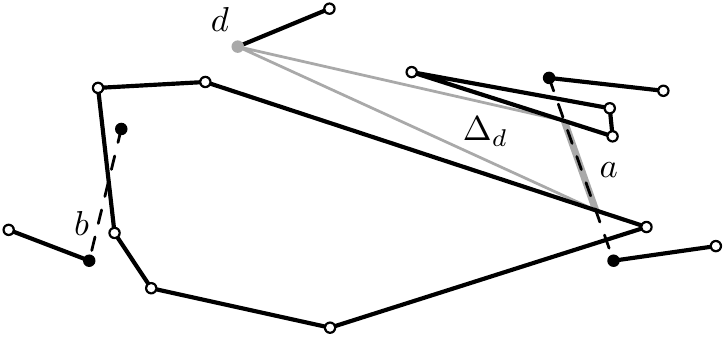}
\caption{If the bridge is defined by the convex hull of a single island, the base edge of the triangle~$\Delta_d$ (shown fat gray) is chosen w.l.o.g.\ on~$a$. Observe that also other convex hulls of islands might intersect the defining edge, but no such convex hulls are part of $X'$ if $a$ is part of the optimal polygon.}
\label{fig_island_no_chord}
\end{figure}

\subsection{Initialization}
There is another technical difficulty related to the segments that intersect the first bridge:
For these, we need a way to determine whether there is already a part on the inside, and in addition, there are more cases to consider than for segments in the proof of Theorem~\ref{thm:maximize}.
However, we can apply the same trick as discussed in Section~\ref{sec:maxarea}.
When guessing the initial subproblem, we do not guess three consecutive edges, but two ``opposite'' edges for which there exists an island $I$ whose convex hull is intersected by both edges.
The two edges and $I$ define two subproblems.
After checking all such pairs of edges, we know that all remaining solution candidates are in our classical setting, i.e., if an island intersects the chord of a bridge, it has to be inside the partial solution.
Note that in this special case it may occur that an island intersects both the chord~$\beta_0$ and one of the edges that define the initial subproblem;
in any case, we know that the remaining part of the island has to be inside the resulting polygon.

Hence, the analogous statements to the lemmata in the previous section hold, and we have a polynomial-time dynamic programming algorithm that finds an optimal solution for any of the four variants of the problem.%
\footnote{Concerning the running time, the same observations hold as in the previous section.
We have $O(|P|^6)$ subproblems.
In each subproblem, we basically have to choose the point $d$ and the bitangent intersecting the island containing $d$.
After this guess, all necessary conditions can be checked in linear time.}

\begin{theorem}\label{thm_islands}
Given a point set $P$ in the plane and a partition $S_P$ of $P$ such that the convex hulls of any two elements of $S_P$ are disjoint, we can solve any of the four variants of the {\sc SPP} in polynomial time.
\end{theorem}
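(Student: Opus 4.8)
The plan is to reduce Theorem~\ref{thm_islands} to the machinery already developed for disjoint segments, treating each island as the convex analogue of a segment and each convex hull boundary as the analogue of a segment's two endpoints. First I would observe that the optimal stabber $C^*$ is again spanned by points of $P$ (by Proposition~\ref{prop_cluster_hull} for the maximization variants, and by a straightforward convexity argument for the minimization variants, since picking an interior point of an island can only increase perimeter and area while preserving the stabbing property). Thus the candidate vertex set is finite and the bitangent set $B$ (segments spanned by points of \emph{different} islands) together with the crossings of $B$ with convex hull edges gives a polynomial-size set of candidate apices, exactly mirroring the set $A$ of the segment algorithm.

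The core of the argument is to re-establish the triangulation structure. Taking $\mathcal{Q}=C^*$ and $S_c$ the islands whose convex hulls meet $\partial\mathcal{Q}$, I would define $X'=X\setminus\bigcup_{I\in S_c}\CH(I)$ and argue that, exactly as in Lemma~\ref{lem_c_triang}, $X'$ admits a triangulation whose chords have endpoints that are either vertices of some $\CH(I)$ or crossings of $\partial\mathcal{Q}$ with some $\partial\CH(I)$. The analogues of Lemmata~\ref{lem_c_triang}, \ref{lem_both_inside}, and~\ref{lem_case_4} then carry over: a segment (now a convex hull) entering a subproblem region through the chord must have its relevant portion inside the subproblem solution, guaranteeing that subproblems combine correctly. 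The recursive case analysis (Cases~1, 2.1, 2.2) on the chord $t$ and guessed apex $d$ is essentially unchanged, the only new situation being a bridge consisting of a single island's convex hull boundary with no chord, which I handle by choosing the base edge $t$ of $\Delta_d$ on the free portion of $a$ beyond $a_\beta$, as described in the Section~\ref{sec:islands} discussion and \figurename~\ref{fig_island_no_chord}.

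The main obstacle, and the place where islands genuinely differ from segments, is that a bridge no longer determines which point of a bordering island is chosen for the current side of the solution. For a segment, the structure of $(a,b,\beta)$ pins down the selected endpoint; for an island $I_1$ touching the bridge at a hull vertex not lying on $a$, the algorithm cannot decide locally whether that point belongs to $C_{a,b,\beta}$ or to $C_{\overline{b},\overline{a},\beta}$. I resolve this by augmenting the subproblem signature with a Boolean flag per bridge-island indicating whether a point of that island is picked inside the subproblem region, and trying both settings when splitting; this keeps the subproblem count at $O(|P|^6)$ since each subproblem is specified by $a$, $b$, one or two islands, a chord (or hull edge), and the flags. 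Finally, for the maximization variants I would invoke the ``opposite-bitangent'' initialization of the proof of Theorem~\ref{thm:maximize}: instead of three consecutive edges, guess two edges whose supporting segments both cross a common island $I$, spawning a matched pair of subproblems, so that after this first phase every remaining candidate has all endpoints of chord-crossing islands strictly inside the solution. Combining the flag mechanism, the island-bridge triangulation, and this initialization yields analogues of all the lemmata, and since every subproblem is solved in polynomial time by dynamic programming over a polynomial number of states, all four variants of the {\sc SPP} for disjoint islands are solvable in polynomial time.
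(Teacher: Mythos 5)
Your proposal mirrors the paper's own proof in almost every ingredient: the triangulation of $X' = X \setminus \bigcup_{I\in S_c}\CH(I)$, the use of Proposition~\ref{prop_cluster_hull} to restrict to hull vertices, the per-island Boolean flags in the subproblem signature (with both settings tried at each split), the $O(|P|^6)$ subproblem count, and the treatment of a single-island bridge by taking the base edge of $\Delta_d$ on the free portion of $a$ beyond $a_\beta$. There is, however, one concrete deviation that creates a gap: you invoke the ``opposite-edges'' first-phase initialization only for the maximization variants, keeping the standard three-consecutive-bitangents initialization for minimization. The paper applies that first phase to the island algorithm for \emph{all} variants, and this is not cosmetic.

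For segments, the standard initialization is sound in the minimization setting because a segment crossing the initial bridge $\beta_0$ (an edge of the convex optimum $C^*$) cannot cross any other edge of $C^*$: by convexity it would then have both endpoints outside $C^*$, contradicting stabbing. That is what makes the convention ``everything entering through the chord lies inside the partial solution'' compatible with $C^*$. For islands this argument breaks down: a convex hull is two-dimensional and can cross $\beta_0$ \emph{and} further edges of $C^*$ while the island still has a point inside $C^*$. Concretely, place four singleton islands at the corners of a square (forcing $C^*$ to be that square), and at each corner an island consisting of one point at the center of the square and two points outside the two incident edges; then every edge of $C^*$ is crossed by an island whose hull also crosses an adjacent edge, so no choice of $\beta_0$ satisfies your convention, and your minimization algorithm never certifies the true optimum --- every subproblem it accepts corresponds to a strictly larger polygon. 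This is exactly the ``technical difficulty'' the paper flags for islands intersecting the first bridge and resolves by first guessing two edges of the solution crossed by a common island $I$, with part of $\partial\CH(I)$ as the bridge, before falling back to the classical setting. The fix is already in your toolkit; it must simply be applied to all four variants, not only to maximization.
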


Observe that convexity of the stabbed sets is crucial for our approach. Schlipf~\cite{schlipf} showed that finding a convex transversal is NP-complete if the stabbed sets are non-convex, even if they are disjoint.
Her reduction can be adapted to our setting in a way similar to the one shown in the next section.

\section{NP-hardness of the {\sc MinPerSPP} and the {\sc MinAreaSPP}}\label{sec:hardness}

In this section we prove that the {\sc MinPerSPP} and the {\sc MinAreaSPP} are NP-hard by a reduction from 3-SAT.
Note that the NP-hardness of the maximization variants of the {\sc SPP} are already known~\cite{ich-waac-11,LK}.

Our reduction has a structure very similar those used in~\cite{ADKMPSY,ich-waac-11,daescu2010,LK}. In the following we give the construction, not only for completeness, but also because it will be used afterwards to show hardness of non-crossing clusters. In addition, we also provide a full proof that the coordinates of the points can be described in polynomial time. This is mentioned in the previous constructions, but details are often omitted.

%Moreover,  , although the details are different, and we include a full proof of the fact that the coordinates of the points used can be described in polynomial time.
%\rodrigo{Added the previous sentence, perhaps someone can improve it a bit}
%
%}.
%
%
% In the \pb{MAX-E3-SAT} the input consists of $n$ boolean variables
% $x_1,\ldots,x_n$ and $m$ clauses $C_1,\ldots,C_m$, over $x_1,\ldots,x_n$
% with exactly three literals each, and the
% output is an assignment of values to $x_1,\ldots,x_n$ so that the number of
% satisfied clauses  is maximized~\cite{hastad}.

\begin{theorem}\label{thm_np_hard}
%Given a set $S$ of segments and a number~$k$, it is NP-hard to decide whether there exists a stabbing polygon for $S$ with perimeter of length at most~$k$.
The {\sc MinPerSPP} and the {\sc MinAreaSPP} are NP-hard.
\end{theorem}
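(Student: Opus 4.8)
The plan is to reduce from 3-SAT, following the structure of the reductions in~\cite{ADKMPSY,ich-waac-11,daescu2010,LK}, but adapting it so that the objective is \emph{minimizing} perimeter (or area) rather than maximizing, and so that the stabbing definition (one endpoint inside $\mathcal{P}$) drives the combinatorial choice. Given a 3-SAT formula $\varphi$ with variables $\xi_1,\dots,\xi_m$ and clauses $c_1,\dots,c_k$, I would build a set $S$ of segments laid out along a large convex curve (e.g., a circle or a very flat arc) so that, intuitively, the convex hull of \emph{any} valid selection of endpoints is forced to be close to this curve. The key gadgets are \emph{variable gadgets} and \emph{clause gadgets}. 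Each variable $\xi_i$ is encoded by a crossing pair (or a small bundle) of segments whose two endpoint-choices correspond to the truth values \textsc{true}/\textsc{false}; because the segments are allowed to \emph{cross}, the two choices genuinely compete, and exactly one endpoint of each segment can be selected to keep the perimeter small. The crossings are essential here: this is precisely where the general (crossing) case departs from the polynomial non-crossing case of Theorem~\ref{teo:main}, and it is what makes the problem hard.

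The heart of the construction is to arrange coordinates so that a selection of endpoints yields a convex hull of perimeter below a threshold $\tau$ \emph{if and only if} the induced truth assignment satisfies $\varphi$. First I would place the variable gadgets so that, for each variable, one of the two endpoint choices lies ``inward'' (contributing little to the hull) and the other lies ``outward'' (forcing a detour that lengthens the perimeter); consistency of a literal's value across all its occurrences is enforced by tying the clause gadgets to the same geometric choice. For each clause $c_j$, I would introduce a gadget (a small segment or cluster of segments positioned near the relevant literal positions) that can be stabbed cheaply only when at least one of its three literals is set to \textsc{true}; if all three literals are \textsc{false}, the clause gadget forces an extra outward excursion of the hull, pushing the perimeter above $\tau$. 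Summing the local perimeter contributions, I would argue that the total perimeter is at most $\tau$ exactly when every clause is satisfied, giving the desired equivalence between satisfiability of $\varphi$ and the existence of a stabber of perimeter $\le \tau$. The same layout, with the threshold restated in terms of enclosed area, should simultaneously handle the \textsc{MinAreaSPP}, since the gadgets are designed so that an outward excursion increases both perimeter and area.

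A substantial part of the proof is the explicit coordinate/bit-complexity argument, which the excerpt flags as often omitted in prior work: I would show that all endpoints can be assigned rational coordinates of polynomially many bits and that the threshold $\tau$ can be specified so that the gap between satisfiable and unsatisfiable instances exceeds the numerical precision needed to distinguish them. Concretely, I would choose the gadget placements on a coarse integer grid, bound the minimum perimeter difference between a satisfying and a non-satisfying configuration from below by an inverse-polynomial (or explicit constant) quantity, and verify that the sums of distances involved can be compared within the real-RAM model assumed in the introduction (recalling that the sum-of-radicals issue means we claim NP-hardness, not NP-completeness, for the perimeter version). The main obstacle I anticipate is exactly this quantitative separation: I must guarantee that the ``penalty'' a falsified clause imposes on the perimeter is strictly and detectably larger than the cumulative rounding and positional slack across all $m$ variable gadgets and $k$ clause gadgets, so that the threshold test is robust. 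Handling the crossing segments cleanly — ensuring that a small-perimeter hull is forced to make a genuine binary choice at each variable gadget and that no ``cheating'' selection can satisfy the perimeter bound while violating a clause — is the most delicate design step, and I would verify it gadget-by-gadget before assembling the global bound.
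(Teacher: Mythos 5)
Your high-level plan --- a reduction from 3-SAT with gadgets laid out on a circle, an exact perimeter threshold met if and only if the formula is satisfiable, and an explicit polynomial bit-complexity argument for the coordinates --- coincides with the paper's proof skeleton. However, the two mechanisms that make the paper's reduction actually work are absent or contradicted in your sketch. First, you never give the mechanism that propagates a truth value from a variable gadget to the clauses; ``tying the clause gadgets to the same geometric choice'' is exactly the step that needs a construction. The paper does it with \emph{connector} segments $\overline{p_{j,k}}\,p_{j,k}$, each having one endpoint at the variable gadget ($T_i$ for a positive literal, $F_i$ for a negative one) and the other at a clause point $p_{j,k}$. A clause gadget is two zero-length anchor segments $c_j, d_j$ plus a triangle of segments on three arc points $p_{j,1}, p_{j,2}, p_{j,3}$: stabbing the triangle forces at least two of the three points into the polygon, and the third may be omitted precisely when its connector is already stabbed at the variable end, i.e.\ when that literal is true. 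If all three literals are false, all three points must be vertices and the perimeter strictly exceeds the budget for that gadget. Note also that the crossings (which, as you correctly say, are where hardness enters) occur among the connectors; the variable gadget itself is a single non-crossing chord $T_iF_i$ plus two zero-length anchors $a_i, b_i$, not a ``crossing pair'' of segments.

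Second, and more substantively, your variable gadget with one ``inward'' (cheap) and one ``outward'' (expensive) endpoint undermines the threshold accounting. The paper's equivalence rests on the fact that every satisfying configuration has exactly the same perimeter: each variable gadget contributes $P_v = |a_iT_i|+|T_ib_i| = |a_iF_i|+|F_ib_i|$ whichever of $T_i, F_i$ is chosen (the trapezoid is symmetric, and all gadget points lie on the circle), and each satisfied clause contributes exactly $P_c$ (all three two-point choices have equal length), so the minimum equals a fixed, assignment-independent value exactly when the formula is satisfiable. With asymmetric per-variable costs, the perimeter of a satisfying selection depends on how many variables are set true, so ``perimeter $\le \tau$ iff satisfiable'' does not hold for any single $\tau$ unless you additionally prove that the clause penalty dominates the total cost variation over all assignments --- exactly the quantitative balancing you defer to a later ``gadget-by-gadget'' verification. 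Until the connector mechanism is specified and the gadget costs are made assignment-independent (or the dominance inequality is proved), the reduction's ``if and only if'' direction is not established. The remaining ingredients you list --- rational points on the unit circle, replacing zero-length segments by sufficiently short ones, real-RAM comparisons in view of the sum-of-radicals issue --- match the paper's appendix and are routine once the gadgets are fixed.
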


\begin{proof}
For ease of exposition, we present the proof for the {\sc MinPerSPP}.
As mentioned in the end of the section, the adaptations needed for the construction to work for the {\sc MinAreaSPP} are minor.

Let a 3-SAT instance consist of $n$ variables
$x_1,\ldots,x_n$ and $m$ clauses $C_1,\ldots,C_m$.
We reduce this instance to the following one of
the {\sc MinPerSPP}. We draw a circle and place variable gadgets in the left semicircle,
clause gadgets in the right semicircle, and segment connectors joining variable
gadgets with clause gadgets. See \figurename~\ref{fig:np-reduction}a. 
%Given any two points $p$ and $q$ in the plane, let $pq$ denote the straight segment
%joining $p$ and $q$, and $|pq|$ denote the length of $pq$.

\begin{figure}[bt]
% 	\centering
% 	\subfloat[]{
% 		\includegraphics[width=3cm]{np-proof-overview}
% 		\label{fig:np-overview-2}
% 	}
% 	\hspace{1cm}
% 	\subfloat[]{
% 		\includegraphics[scale=0.4]{np-proof-variable-gadget}
% 		\label{fig:np-variable-gadget-2}
% 	}
% 	\hspace{1.5cm}
% 	\subfloat[]{
% 		\includegraphics[scale=0.4]{np-proof-clause-gadget}
% 		\label{fig:np-clause-gadget-2}
% 	}
% 	\label{fig:np-reduction-2}
\centering
\includegraphics{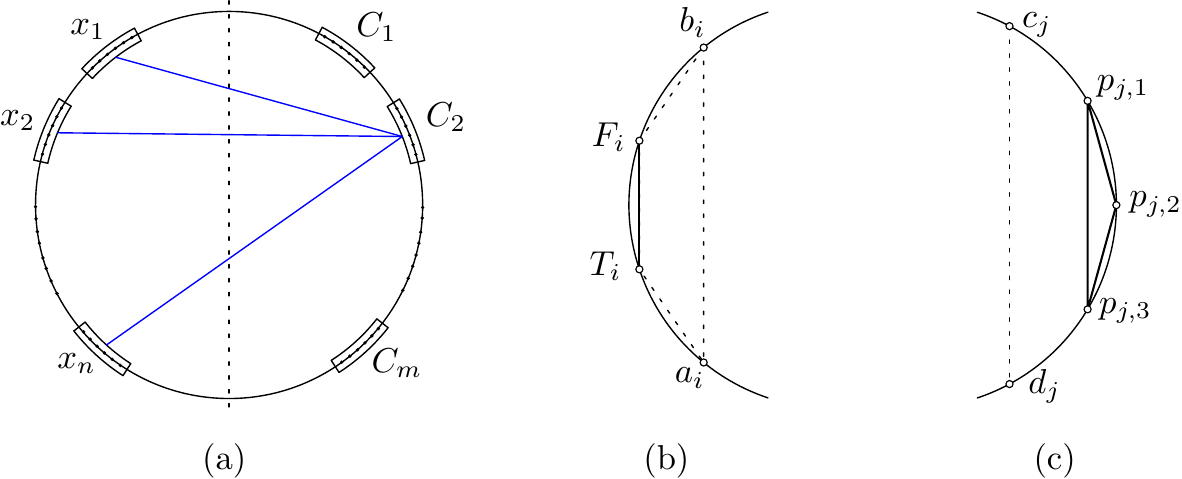}
		\caption{(a) Overview of the reduction from the \pb{3-SAT}.
		Variable gadgets (b) are to the left and clause gadgets (c) to the right. 
	}
\label{fig:np-reduction}
\end{figure}

% \rodrigo{We should unify the ``subfigure'' style, and also how we refer to them: 7(b) vs 7b}
% \alex{I did not manage to do so...}

%\paragraph*{Variable Gadgets}
%\paragraph*{Gadgets}
\paragraph{Gadgets}
For each variable $x_i$, $i\in[1..n]$,
we put points $T_i$ and $F_i$ on the circle and place three segments:
segment $T_iF_i$, and two zero-length segments $a_i$ and $b_i$,
so that $T_iF_i$ is parallel to the line containing both $a_i$ and $b_i$.
Refer to \figurename~\ref{fig:np-reduction}b.
Furthermore, trapezoids with vertices $a_i,T_i,F_i,b_i$, for all $i\in[1..n]$, are congruent.
Let $P_v:=|a_iT_i|+|T_ib_i|=|a_iF_i|+|F_ib_i|$. 
%and $P'_v:=|a_iT_i|+|T_iF_i|+|F_ib_i|$.

%\paragraph*{Clause gadgets}
For each clause $C_j$, $j\in[1..m]$,
we first place two zero-length segments $c_j$ and $d_j$.
We select three points $p_{j,1}$, $p_{j,2}$, and $p_{j,3}$,
dividing evenly the smallest arc of the circle joining $c_j$ and $d_j$ into four arcs,
and then we place three other segments: $p_{j,1}p_{j,2}$,
$p_{j,2}p_{j,3}$, and $p_{j,3}p_{j,1}$.
See \figurename~\ref{fig:np-reduction}c. The convex pentagons with
vertices $d_j,c_j,p_{j,1},p_{j,2},p_{j,3}$, for all $j\in[1..m]$, are congruent.
Let $P_c:=|c_jp_{j,1}|+|p_{j,1}p_{j,2}|+|p_{j,2}d_j|=|c_jp_{j,1}|+|p_{j,1}p_{j,3}|+|p_{j,3}d_j|=
|c_jp_{j,2}|+|p_{j,2}p_{j,3}|+|p_{j,3}d_j|$.
% and $P'_c:=|c_jp_{j,1}|+|p_{j,1}p_{j,2}|+
%|p_{j,2}p_{j,3}|+|p_{j,3}d_j|$.
%We further ensure that $m(P'_c-P_c)<P'_v-P_v$. This condition will be necessary in the problem
%reduction.

%\paragraph*{Connectors}
For each clause $C_j$, $j\in[1..m]$, we add segments called \emph{connectors} as follows.
Let $x_i$ be the variable involved in the first literal of $C_j$. If $x_i$ appears in positive form
then let $\overline{p_{j,1}}$ denote the point $T_i$. Otherwise, if $x_i$ appears in negative form,
then let $\overline{p_{j,1}}$ denote the point $F_i$.
In both cases we add the connector $\overline{p_{j,1}}p_{j,1}$. 
We proceed analogously with the variable in the
second literal and point $p_{j,2}$, and with the variable in the third literal and point $p_{j,3}$.

\newcommand{\minper}{\ensuremath{m}}

%\paragraph*{Problem reduction} 
\paragraph{Problem reduction}
Consider the instance of the {\sc MinPerSPP} consisting of the set of segments added at variable gadgets,
clause gadgets, and connectors.
Observe that any optimal polygon $\mathcal{P}_\mathrm{opt}$ for this instance satisfies the following
conditions:
\begin{itemize}
\item[(a)] For each variable $x_i$, $i\in[1..n]$, $\mathcal{P}_\mathrm{opt}$
contains as vertices the points $a_i$ and $b_i$, and at least
one point between $T_i$ and $F_i$.

\item[(b)] For each clause $C_j$, $j\in[1..m]$, $\mathcal{P}_\mathrm{opt}$
contains the points $c_j$ and $d_j$ as vertices, and at least
two points among $p_{j,1}$, $p_{j,2}$, and $p_{j,3}$.

\item[(c)] For each clause $C_j$, $j\in[1..m]$, $\mathcal{P}_\mathrm{opt}$
contains exactly two points among $p_{j,1}$, $p_{j,2}$, and $p_{j,3}$  as vertices
if and only if it contains at least one point among 
$\overline{p_{j,1}}$, $\overline{p_{j,2}}$, and $\overline{p_{j,3}}$ as vertex,
located in variable gadgets. See \figurename~\ref{fig:np-transversing-clauses}.

%$\mathcal{P}_\mathrm{opt}$ contains as vertices
%points $a_i$ and $b_i$ for all variables $x_i$, $i\in[1..n]$, and
%points $c_j$ and $d_j$ for all clauses $C_j$, $j\in[1..m]$.
%
%\item[(b)] For each variable $x_i$, $i\in[1..n]$,
%$\mathcal{P}_\mathrm{opt}$ contains exactly one of $T_i$ and $F_i$ as vertex between $a_i$ and $b_i$.

\item[(d)] The perimeter of $\mathcal{P}_\mathrm{opt}$ is at least
the minimum possible perimeter
\[
\minper=|b_1c_1|+|a_nd_m|+\sum_{i=1}^{n-1}|a_ib_{i+1}|+\sum_{j=1}^{m-1}|d_jc_{j+1}|+nP_v+mP_c \enspace .
\]

If $\mathcal{P}_\mathrm{opt}$ has perimeter $\minper$, $\mathcal{P}_\mathrm{opt}$ contains, for every $i\in[1..n]$, exactly one point between $T_i$ and $F_i$ as vertex, 
and contains, for every $j\in[1..m]$, exactly two points among $p_{j,1}$, $p_{j,2}$, and $p_{j,3}$ as vertices.
\end{itemize}
{
Condition (a) follows from the fact that $a_i$ and $b_i$, $i\in[1..n]$, are zero-length segments,
where at least one endpoint from each of them must be contained in $\mathcal{P}_\mathrm{opt}$, and the
presence of the segment $T_iF_i$.
Condition (b) is due to the fact that $c_j$ and $d_j$, $j\in[1..m]$, are zero-length segments
and the presence of the segments $p_{j,1}p_{j,2}$, $p_{j,2}p_{j,3}$, and $p_{j,3}p_{j,1}$.
Condition (c) follows from the presence of the segments $p_{j,1}p_{j,2}$, $p_{j,2}p_{j,3}$, and $p_{j,3}p_{j,1}$,
together with the connectors $\overline{p_{j,1}}p_{j,1}$, $\overline{p_{j,2}}p_{j,2}$, and $\overline{p_{j,3}}p_{j,3}$.
Condition (d) follows from (a) and (b).

\begin{figure}[bt]
	\centering
	\includegraphics{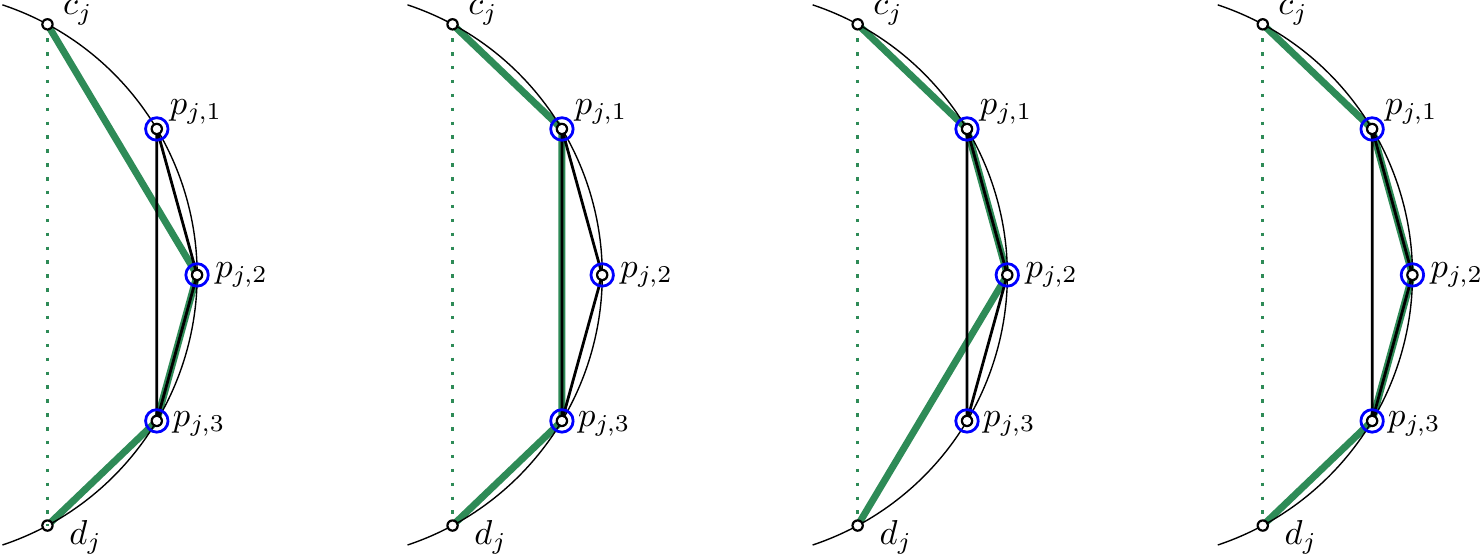}
	\caption{
		In each clause gadget, if at least one connector has an endpoint
		not in the gadget as a vertex of optimal polygon $\mathcal{P}_\mathrm{opt}$,
		then exactly two points among $p_{j,1}$, $p_{j,2}$, and $p_{j,3}$
		are vertices of $\mathcal{P}_\mathrm{opt}$ (any of the first three figures from left to right).
		Otherwise, all three are vertices of $\mathcal{P}_\mathrm{opt}$ (rightmost figure).
	}
	\label{fig:np-transversing-clauses}
\end{figure}

Let $\mathcal{P}$ be any feasible polygon satisfying conditions (a)-(c) and having minimum perimeter~$\minper$.
Polygon $\mathcal{P}$ induces the following assignment for the variables
$x_1,x_2,\ldots,x_n$: we assign $x_i$ to true if point
$T_i$ is a vertex of $\mathcal{P}$, false otherwise.
With this assignment we can ensure that every clause $C_j$ is satisfied if and only if
exactly two points among $p_{j,1}$, $p_{j,2}$, and $p_{j,3}$ are vertices of
$\mathcal{P}$.
% Let $k$ be the number of clauses satisfied by this assignment. Let $\pi(\mathcal{P})$ denote the perimeter of $\mathcal{P}$ and $P_0:=|b_1c_1|+|a_nd_m|+\sum_{i=1}^{n-1}|a_ib_{i+1}|+\sum_{j=1}^{m-1}|d_jc_{j+1}|$,
% then:
% \begin{eqnarray}
% \pi(\mathcal{P})&=&P_0+nP_v+kP_c+(m-k)P'_c= P_0+nP_v+mP'_c-k(P'_c-P_c) \enspace 
% \label{eqn2}
% \end{eqnarray}
% 
% Conversely, each assignment for $x_1,x_2,\ldots,x_n$ satisfying exactly $k$ clauses,
% induces a feasible polygon $\mathcal{P}$ satisfying 
% conditions (a)-(c) and equation (\ref{eqn2}). 
Therefore, the 3-SAT formula consisting of the clauses $C_1,\ldots,C_m$ is satisfiable if and only if the perimeter of the {\sc MinPerSPP} is~$\minper$.

% The points of the above gadgets can be placed as $\Theta(nm)$ rational points on the unit circle.
%, with the angle between every two consecutive points of a variable gadget being $\Omega(m)$ times the angle between every two points of a clause gadget, to ensure $m(P'_c-P_c)<P'_v-P_v$.
To complete the proof, it remains to show that the coordinates specifying the positions of the points of the gadgets can be expressed as rational numbers whose size is polynomial in $n$ and $m$.
The proof of this fact is relegated to the appendix.

Note that our reduction uses segments of zero length. 
This can be avoided by 
%locating  an endpoint of each segment sufficiently far 
%from the circle where the gadgets are placed on, in order to guarantee that no optimal solution can contain any of these endpoints.
%
replacing each zero-length segment by a sufficiently short segment, in order to guarantee that the choice of the endpoint makes only a marginal difference in the overall cost of any solution.
}\end{proof}

%\mati{The text from here downwards should be in the conclusion. What do you think?}
Observe that the same reduction with minor modifications
applies for the case of minimizing the
area of the output polygon, i.e., for the {\sc MinAreaSPP}.
Moreover, our proof shows that the problem
remains NP-hard even if the endpoints of all the segments
are in convex position or lie on a circle.
Recently, it has been shown that the case in which the segments are diameters of a circle, both the minimum and maximum area problems can be solved in linear time~\cite{diameter}.

\subsection{Fixed-parameter tractability}
It is worth mentioning that the four variants of the {\sc SPP} are fixed-parameter tractable (FPT) on the number $k$ of segments that 
intersect other segments. 
Namely, let $S'\subseteq S$ be the set of segments of $S$ that do not
intersect any segment of $S$. 
Consider the $2^k$ instances of  {\sc SPP} such that each consists 
of the elements of $S'$ joint with exactly one endpoint (i.e., a segment of length zero) 
of each element of $S\setminus S'$. All these instances can be solved in 
$O(2^k P(n))$ time, for the polynomial
time $P(n)$ of Theorem~\ref{teo:main} or Theorem~\ref{thm:maximize}, since each instance
consists of pairwise disjoint segments. 
The optimal solution for $S$ is among the $O(2^k)$ solutions found for those instances. 
We summarize with the following observation.

\begin{observation}\label{obs_fpt}
Given a set $S$ of $n$ segments, any of the four variants of the {\sc SPP} can be solved in $O(2^k P(n))$ time, where $k$ is the number of segments in $S$ that cross at least another segment from $S$, and $P(n)$ is the running time of the algorithm from Theorem~\ref{teo:main} or Theorem~\ref{thm:maximize} (depending on the variant).
\end{observation}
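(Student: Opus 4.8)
The plan is to reduce the general problem on $S$ to a bounded number of instances of the pairwise-disjoint case, each of which can be solved by Theorem~\ref{teo:main} or Theorem~\ref{thm:maximize}. First I would partition $S$ into the set $S'$ of segments that do not cross any other segment of $S$, and the set $S\setminus S'$ of the $k$ segments that cross at least one other segment. The key observation is that in any optimal stabber, each of the $k$ crossing segments has exactly one of its two endpoints selected (for the minimization variants; for the maximization variants, by Lemma~\ref{lem:LK} we may again restrict attention to a solution using endpoints). Since there are only $k$ such segments, there are only $2^k$ ways to fix which endpoint of each crossing segment is the ``chosen'' one.

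Next, for each of these $2^k$ choices I would build an auxiliary instance as follows: replace each crossing segment in $S\setminus S'$ by the single chosen endpoint, modeled as a zero-length segment, and keep all segments of $S'$ unchanged. Each such auxiliary instance consists only of pairwise disjoint segments: the segments in $S'$ are non-crossing by definition, and the zero-length segments are points, so they cannot cross anything. Hence Theorem~\ref{teo:main} (for minimization) or Theorem~\ref{thm:maximize} (for maximization) applies, and each instance is solved in time $P(n)$.

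Finally I would argue correctness: the optimal solution for $S$ selects, for each crossing segment, exactly one endpoint, and this particular selection is realized in precisely one of the $2^k$ auxiliary instances, namely the one that fixes those endpoints as the zero-length segments. In that instance the optimizer is free to choose endpoints only for the non-crossing segments in $S'$, exactly as the global optimum does, so the value computed for that instance equals the value of the global optimum. Conversely, every auxiliary instance yields a feasible stabber for $S$ (selecting the prescribed endpoints of the crossing segments together with a valid choice for $S'$). Thus the optimum over all $2^k$ instances coincides with the optimum for $S$, giving a total running time of $O(2^k P(n))$.

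The main subtlety is not the running time but ensuring the reduction preserves optimality in both directions, in particular that fixing an endpoint of a crossing segment as a zero-length segment faithfully encodes the constraint that that endpoint lies inside the stabber while the other is free to lie outside; this is immediate from the stabbing definition, since a zero-length segment is stabbed exactly when its single point is contained in $\mathcal{P}$. A minor point worth noting is that the auxiliary construction uses zero-length segments, which is permissible in our formulation (and, if desired, can be replaced by sufficiently short segments as in the proof of Theorem~\ref{thm_np_hard}).
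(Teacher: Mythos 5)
Your proposal is correct and takes essentially the same approach as the paper: the paper likewise splits $S$ into the non-crossing segments $S'$ and the $k$ crossing ones, enumerates all $2^k$ choices of one endpoint per crossing segment, replaces each crossing segment by its chosen endpoint viewed as a zero-length segment, solves each resulting pairwise-disjoint instance with Theorem~\ref{teo:main} or Theorem~\ref{thm:maximize}, and returns the best of the $2^k$ solutions. Your argument is, if anything, more explicit than the paper's, which only asserts that the optimum for $S$ is among the $O(2^k)$ solutions found without spelling out the two directions of the correctness equivalence.
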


\subsection{Generalization to non-crossing clusters}
In Section~\ref{sec:islands}, we generalized the algorithm from segments to islands, i.e., subsets of a point set~$P$ whose convex hulls do not intersect.
Next we remark that relaxing the disjointness only slightly, allowing \emph{non-crossing} clusters, results again in an NP-hard problem.

Recall that a \emph{cluster} is any subset of~$P$.
Two clusters~$J_1$ and $J_2$ are \emph{non-crossing} if the convex hull boundary of $J_1 \cup J_2$ has at most two segments with one endpoint in~$J_1$ and the other endpoint in~$J_2$~\cite{clusters}.
In these terms, the reduction presented uses crossing clusters of size 2.

It is conceivable that 
the key factor in the NP-hardness of the problem relies in allowing the gadgets to cross, since the difference between crossing and non-crossing clusters plays a role in the time complexity of other problems, e.g.~\cite{pl-hvdpo-04}. 
Thus, it could be the case that if clusters can intersect but not cross the problem becomes polynomial-time solvable. 
However, we show next that our reduction can be adapted to non-crossing clusters as well.

To every segment (i.e., cluster of size 2) that connects a variable gadget to a clause gadget, we add another two points that are ``far away'' from the circle in which we place our gadgets, such that the resulting clusters are non-crossing and the points not part of the gadget never get chosen.
Suppose w.l.o.g.\ that the variable gadgets are in the third quadrant of the plane and the clause gadgets are in the first quadrant.
Consider the two lines $\ell_\mathrm{c} : y = 2x + d$ and $\ell_\mathrm{v} : y = (x+d)/2$, for some constant~$d>0$ to be made more precise later. 
For every segment $e_\mathrm{v} e_\mathrm{c}$ that connects a variable gadget with a clause gadget, we project the point $e_\mathrm{v}$ horizontally on $\ell_\mathrm{v}$ and the point $e_\mathrm{c}$ vertically on $\ell_\mathrm{c}$, and add the two projected points to the cluster of the segment; see \figurename~\ref{fig_clusters_reduction}.
Regarding the value of $d$, it suffices to choose a value large enough such that the circle of the construction is below $\ell_\mathrm{c}$ and $\ell_\mathrm{v}$, and far enough from the lines so that the projected points can never be part of an optimal solution.

\begin{figure}
\centering
\includegraphics{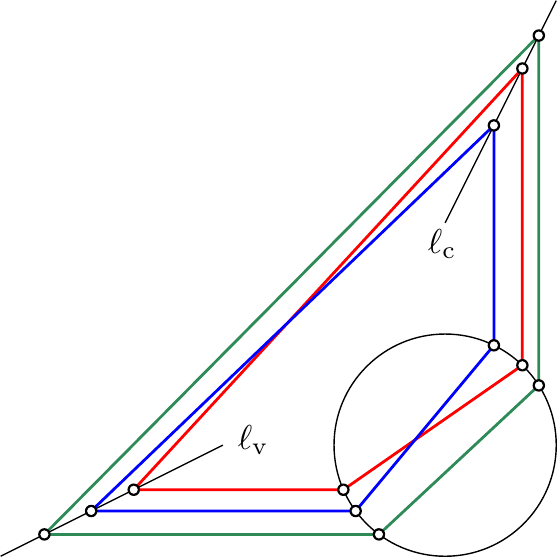}
\caption{Extending segments to clusters of size 4 in convex position.
Each pair of the resulting clusters is non-crossing.}
\label{fig_clusters_reduction}
\end{figure}

Observe that the resulting clusters are in convex position and non-crossing.
Further, no such cluster crosses a segment that is part of a gadget.
None of the new points can be part of the optimal solution if the constant~$d$ is chosen sufficiently large, and hence, the construction behaves in the same way as the construction for segments.
The construction can be easily altered to give a point set in general position by replacing the relevant segments on $\ell_\mathrm{v}$ and $\ell_\mathrm{c}$ by, say, sufficiently flat circular arcs.

\begin{theorem}
Given a set of non-crossing clusters, it is NP-hard to find an optimal solution of the {\sc MinPerSPP} or the {\sc MinAreaSPP}.
The problems remain NP-hard if each cluster is in convex position.
\end{theorem}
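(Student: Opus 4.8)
The plan is to reduce from 3-SAT just as in the proof of Theorem~\ref{thm_np_hard}, reusing the entire variable/clause/connector construction on the circle, and then augmenting each connector segment into a size-4 cluster so that every pair of resulting clusters is non-crossing. Concretely, I would place all variable gadgets in the third quadrant and all clause gadgets in the first quadrant, and fix two auxiliary lines $\ell_\mathrm{v}\colon y=(x+d)/2$ and $\ell_\mathrm{c}\colon y=2x+d$ lying above and to the side of the whole gadget circle. For each connector $e_\mathrm{v}e_\mathrm{c}$, I would add the horizontal projection of $e_\mathrm{v}$ onto $\ell_\mathrm{v}$ and the vertical projection of $e_\mathrm{c}$ onto $\ell_\mathrm{c}$, producing a four-point cluster in convex position (see \figurename~\ref{fig_clusters_reduction}). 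The variable and clause gadgets themselves remain clusters of size two (the zero-length segments and the segments $T_iF_i$, $p_{j,k}p_{j,\ell}$, etc.), which are pairwise non-crossing by the original circular layout.

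The first key step is to verify the two geometric invariants that make the reduction go through. First, I would check that each pair of the new four-point clusters is \emph{non-crossing} in the sense of the definition: the convex hull boundary of the union of any two clusters has at most two edges with one endpoint in each cluster. Since all the connector clusters are pushed out to the two widely separated lines $\ell_\mathrm{v},\ell_\mathrm{c}$ and the gadget circle sits strictly below both, the clusters are separated enough that their hulls interleave in the required bounded way; I would argue this by noting that the projected points dominate the gadget points in the relevant coordinate directions. Second, I would confirm that no new cluster \emph{crosses} any gadget segment, again because the projected points are far from the circle.

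The second key step, and the heart of the argument, is to choose the constant $d$ large enough that \emph{no new point can be a vertex of an optimal solution}. The intuition is that any stabbing polygon must already contain the $a_i,b_i,c_j,d_j$ vertices of every gadget (by conditions (a)-(b) of Theorem~\ref{thm_np_hard}), so each connector cluster is \emph{already stabbed} by a gadget endpoint that is forced to lie inside $\mathcal{P}_\mathrm{opt}$; there is therefore no obligation to select a far-away projected point, and selecting one would strictly increase the perimeter (or area) by an amount growing linearly (resp.\ quadratically) in $d$. Hence for $d$ sufficiently large the optimum never uses a projected point, the feasible choices collapse exactly to those of the original segment instance, and the bound $\minper$ of condition (d) is attained if and only if the 3-SAT formula is satisfiable. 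Because the far points are added at polynomially-describable rational coordinates (the projections of rational gadget points onto rational lines), the instance size stays polynomial, inheriting the coordinate-bound argument of Theorem~\ref{thm_np_hard}. The same construction handles the {\sc MinAreaSPP} with the same minor modifications noted there.

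I expect the main obstacle to be making the phrase ``$d$ large enough'' quantitatively rigorous for \emph{both} the perimeter and the area objective simultaneously. For perimeter one must show that detouring the convex hull out to a projected point on $\ell_\mathrm{v}$ or $\ell_\mathrm{c}$ costs more than any savings elsewhere; for area one must rule out the subtle possibility that a far point could \emph{shrink} the area in some degenerate configuration. I would address this by observing that all gadget points already lie in convex position on the circle, so the gadget endpoints forced inside $\mathcal{P}_\mathrm{opt}$ pin down a convex polygon whose hull is unaffected by any interior point; a projected point, lying strictly outside this hull, can only \emph{enlarge} both perimeter and area, never decrease them. This monotonicity, together with the non-crossing and non-interference invariants above, yields that the reduction is correct and computable in polynomial time, establishing the theorem; the extension to clusters in general position follows by replacing the flat segments on $\ell_\mathrm{v},\ell_\mathrm{c}$ with sufficiently flat circular arcs, as noted.
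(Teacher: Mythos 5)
Your construction is exactly the paper's: the same two lines $\ell_\mathrm{v}$ and $\ell_\mathrm{c}$, the same horizontal/vertical projections of the connector endpoints producing non-crossing four-point clusters in convex position, the same ``choose $d$ large enough'' conclusion, and the same circular-arc perturbation to reach general position. Structurally this is the intended proof.

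One step, however, rests on a false claim: you assert that each connector cluster is \emph{already stabbed} because conditions (a)--(b) of Theorem~\ref{thm_np_hard} force $a_i, b_i, c_j, d_j$ into any optimal polygon. Those forced points do not belong to any connector cluster, so they stab nothing for the connectors; a connector must still be stabbed by $T_i$ or $F_i$, by some $p_{j,k}$, or by one of the new far points, and this forced choice is precisely what ties clause gadgets to variable gadgets and encodes satisfiability (if connectors were stabbed for free, the reduction would collapse). What your argument actually needs---and what your closing monotonicity remark nearly supplies---is the pair of facts: (i) some stabbing polygon avoiding all projected points has cost bounded independently of $d$, e.g.\ the convex hull of all points on the circle, which contains every original endpoint of every cluster; and (ii) any convex polygon containing a projected point also contains the convex hull of the forced gadget points together with that far point, so by monotonicity of perimeter and area under inclusion of convex bodies its cost grows at least linearly in $d$ (for area one should note that the forced points span a region of positive width, which rules out thin slivers of small area reaching out to $\ell_\mathrm{v}$ or $\ell_\mathrm{c}$). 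Facts (i) and (ii) together show that, once $d$ is large, the optimum uses no projected point, hence the cluster instance has the same optimal value as the segment instance of Theorem~\ref{thm_np_hard}, and the equivalence with 3-SAT is inherited verbatim. With that local repair your proof coincides with the paper's.
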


While the structure of the reduction for line segments is similar to the one for the maximization variant by L\"offler and van Kreveld~\cite{LK}, the adaption for non-crossing clusters cannot be done in the same way.
However, for the maximization variant, Proposition~\ref{prop_cluster_hull} applies.

Therefore, for solving the following problem, it is sufficient to consider the cases where each cluster is in convex position.

\begin{open}
What is the complexity of finding a maximum (area or perimeter) stabbing polygon of non-crossing clusters of points?
That is, find a maximum stabbing polygon whose vertices are elements of~$P$ such that no two belong to the same cluster.
\end{open}

\bibliographystyle{abbrv}
\bibliography{refs}

%\mati{NOTE: We removed lots of sections from before. They can all be found in the extras.tex document}
%\input{extras.tex}

\newpage
\appendix

\section{Exact point construction for Theorem~\ref{thm_np_hard}}\label{sec:coordinates}
In this section we give the exact construction for the segment set described in the proof of Theorem~\ref{thm_np_hard}.
We place the segment endpoints of the gadgets at rational coordinates on the unit circle in such a way that the size of both the numerator and denominator of each coordinate are bounded by a polynomial function of the size of the initial problem.

%Canny, Donald, and Ressler~\cite{canny} give an account on how to efficiently select rational points (i.e., points with rational coordinates) on the unit circle.
With respect to the segment endpoints, both the variable gadgets and the clause gadgets have the same structure.
For any point $p$, let $\angle p$ denote the polar angle of the point.
If we construct a variable gadget such that $\alpha_v := \angle a_i - \angle T_i = \angle T_i - \angle F_i = \angle F_i - \angle b_i$, the gadget works as described.
If we use the same angle $\alpha_v$ for all variable gadgets, the gadgets are congruent.
The analogous holds if we choose an angle $\alpha_c = \angle c_j - \angle p_{j,1} = \ldots = \angle p_{j,3}-\angle d_j$ between two consecutive points when constructing a clause gadget; we obtain a set of congruent gadgets that fulfill the properties described previously.
In the remainder of this section we show how to choose the reference points $a_i$ and $c_j$ for each gadget among the rational points on the unit circle and the angles $\alpha_v$ and $\alpha_c$.

We use several well-known facts about rational points on the unit circle (see, e.g., \cite{husemoeller}).
For any $t\in \mathbb{Q}$, the point $p_t = \left (\frac{t^2-1}{t^2+1}, \frac{2t}{t^2+1} \right )$ is rational and lies on the unit circle.
Hence, the coordinates of $p_t$ describe the cosine and sine, respectively, of the polar angle $\angle p_t$.
%We will restrict ourselves to the case $t\in\mathbb{N}$.
Observe that for any $t>1$, point $p_t$ lies in the first quadrant.\footnote{Note that there are multiple conventions for choosing the sign of the coordinates in this parametrization.}
In particular, we have $\angle p_t \rightarrow 0$ and $p_t \rightarrow (1,0)$ when $t \rightarrow +\infty$.
Using the trigonometric identity $
\sin(\alpha_1 - \alpha_2) = \sin( \alpha_1) \cos( \alpha_2) - \cos( \alpha_1) \sin( \alpha_2)$
we obtain:
\begin{align*}
\sin(\angle p_{t} - \angle p_{t+1}) &=
\left( \frac{2t}{t^2 + 1} \right)\left( \frac{(t+1)^2-1}{(t+1)^2+1}\right) - \left( \frac{t^2-1}{t^2+1}\right) \left( \frac{2(t+1)}{(t+1)^2+1}\right)\\
&= \frac{2(t^2+t+1)}{(t^2+1)(t^2+2t+2)} \enspace .
\end{align*}

\begin{observation}\label{obs_angle}
For $t \geq 1$, the angle $\angle p_t - \angle p_{t+1}$ is monotonically decreasing in $t$.
\end{observation}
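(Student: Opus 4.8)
The plan is to exploit the sine formula just derived and reduce the monotonicity of the angle gap to the monotonicity of its sine. Write $f(t) := \angle p_t - \angle p_{t+1}$, so that the computation preceding the observation gives $\sin f(t) = g(t)$ with $g(t) = \frac{2(t^2+t+1)}{(t^2+1)(t^2+2t+2)}$; the goal is to show that $f$ is strictly decreasing for $t \geq 1$. The idea is that, as long as $f$ stays in a range on which $\sin$ is increasing and injective, the monotonicity of $f$ is equivalent to the monotonicity of $g$, which is an explicit rational function we can differentiate.

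First I would pin down the range of $f$. For $t \geq 1$ the point $p_t$ lies in the closed first quadrant (at $t=1$ it is $(0,1)$, and for $t>1$ it is strictly interior, as noted above), so $\angle p_t \in (0,\pi/2]$. Hence $f(t) = \angle p_t - \angle p_{t+1} \in (-\pi/2,\pi/2)$, and since the displayed expression for $\sin f(t)$ is manifestly positive for $t \geq 1$, in fact $f(t) \in (0,\pi/2)$. On this interval $\sin$ is strictly increasing and injective, so $f(t) = \arcsin(g(t))$ with $\arcsin$ strictly increasing on the range of $g$; therefore $f$ is strictly decreasing if and only if $g$ is strictly decreasing.

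It thus suffices to prove $g'(t) < 0$ for $t \geq 1$ (indeed for all $t>0$). Writing $g = N/D$ with $N(t) = 2t^2 + 2t + 2$ and $D(t) = (t^2+1)(t^2+2t+2) = t^4 + 2t^3 + 3t^2 + 2t + 2$, the sign of $g'$ equals the sign of $N'D - ND'$. A direct expansion yields
\[
N'(t)\,D(t) - N(t)\,D'(t) = -2t\,\bigl(2t^4 + 5t^3 + 8t^2 + 7t + 2\bigr),
\]
whose second factor has only positive coefficients; hence the whole expression is strictly negative for every $t>0$. Consequently $g'(t)<0$, so $g$ is strictly decreasing and, by the reduction above, $f(t) = \angle p_t - \angle p_{t+1}$ is strictly decreasing for $t \geq 1$, as claimed.

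The only point requiring care — more a detail to get right than a genuine obstacle — is the range argument: one must verify that $f(t)$ never leaves $(0,\pi/2)$, since this is exactly what licenses the passage from the monotonicity of $\sin f$ back to the monotonicity of $f$. The polynomial identity for $N'D - ND'$ is the computational heart of the argument, but it is routine and its factored form makes the sign transparent.
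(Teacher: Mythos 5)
Your proof is correct and follows the same route the paper implicitly intends: the observation is stated right after the computation of $\sin(\angle p_t - \angle p_{t+1})$, and the monotonicity is meant to follow from that rational expression being decreasing. You have simply made rigorous the two details the paper leaves unstated — that both angles lie in $(0,\pi/2]$ so the difference stays in the range where $\sin$ is injective, and that the derivative computation $N'D - ND' = -2t\,(2t^4+5t^3+8t^2+7t+2) < 0$ confirms the sine expression is strictly decreasing — both of which check out.
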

The following inequality will allow us to choose both the reference points and the small angles for the gadget construction.
If $1 \leq t < 5N$ for some positive integer $N$ (the factor 5 is chosen with foresight), we have

%\begin{align*}
%\sin(\angle p_{t} - \angle p_{t+1}) 
%&\geq \frac{2(25N^2+5N+1)}{(25N^2+1)(25N^2+10N+2)}
%\geq \frac{62}{962N^2} %\enspace .
%\end{align*}

\begin{align*}
\sin(\angle p_{t} - \angle p_{t+1}) 
\geq \frac{2(25N^2+5N+1)}{(25N^2+1)(25N^2+10N+2)}\\
\geq \frac{50N^2}{(25N^2+1)(25N^2+10N+2)}
=  \frac{50N^2}{625N^4+250N^3+75N^2+10N+2}\\
\geq  \frac{50N^2}{625N^4+250N^4+75N^4+10N^4+2N^4}
= \frac{50}{962N^2} \enspace .
\end{align*}

We can use this bound to define a rational angle $\angle p_s$ that is smaller than all intervals we consider in the construction:
\[
\sin(\angle p_s) = \frac{2s}{s^2+1} < \frac{2s}{s^2} \stackrel{!}{\leq} \frac{50}{962N^2} \enspace ,
\]
which is fulfilled if
\[
s \geq \frac{962N^2}{25} \enspace .
\]

Let us place the endpoints for the variable gadgets.
We choose $a_i = p_{(5i-4)}$ and therefore set $N = n$.
Further, we choose $s = 100n^2$, which fulfills the above inequalities.
By the choice of $s$, we have $\angle a_{i+1}+3\angle p_s <\angle a_{i}$, hence the variable gadgets do not interfere with each other.
We therefore can place $T_i, F_i,$ and $b_i$ on the arc segment between $a_i$ and $a_{i+1}$ by rotating $a_i$ up to three times by $\alpha_v = \angle p_s$.
The points can be explicitly computed from $a_i$ by using the coordinates of $p_s$ as elements of the rotation matrix:
\[
 \begin{pmatrix} \cos(\angle p_s) & \sin(\angle p_s)\\ -\sin(\angle p_s) & \cos(\angle p_s)  \end{pmatrix}^{\!\!\kappa} \cdot \begin{pmatrix} \cos(\angle a_i)\\ \sin(\angle a_i) \end{pmatrix}
\]
for $\kappa \in \{1, 2, 3\}$.
Observe that the coordinates of $a_i$ and $p_s$ are the sines and cosines of the corresponding angles and are rational; therefore, the resulting points also have rational coordinates, which are bounded by a polynomial function in the size of the input, since $\kappa$ is constant.
Finally, we change the signs of the coordinates, so that the variable gadgets are placed on the third quadrant.

For the clause gadgets, we can basically proceed in the same manner, choosing $N = m$ in the above equation, as well as $c_j = p_{(5j-4)}$.

\end{document}